\acrodef{IR}{information retrieval}
\acrodef{PLM}{pre-trained language model}
\acrodef{NQ}{Natural Questions}
\acrodef{ER}{Eliminative Retrieval}
\acrodef{ExcluIR}{Exclusion in Information Retrieval}
\acrodef{RAG}{Retrieval-Augmented Generation}
\acrodef{DR}{Dense Retrieval}
\acrodef{GR}{Generative Retrieval}
\acrodef{QuePR}{Query Product Retriever}
\acrodef{NevIR}{Negation in Neural Information Retrieval}
\acrodef{T2VR}{Text-to-Video Retrieval}
\acrodef{MRR}{Mean Reciprocal Rank}
\acrodef{RR}{Right Rank}
\author{
    Wenhao Zhang\textsuperscript{\rm 1}, \ 
    Mengqi Zhang\textsuperscript{\rm 1}, \ 
    Shiguang Wu\textsuperscript{\rm 1}, \ 
    Jiahuan Pei\textsuperscript{\rm 2}, \ 
    Zhaochun Ren\textsuperscript{\rm 3},\\\ 
    \textbf{Maarten de Rijke\textsuperscript{\rm 4},} \ 
    \textbf{Zhumin Chen\textsuperscript{\rm 1},} \ 
    \textbf{Pengjie Ren\textsuperscript{\rm 1}\thanks{$^*$ Corresponding authors.}} \\\ 
    \textsuperscript{\rm 1} Shandong University, Qingdao, China \\
    \textsuperscript{\rm 2} Centrum Wiskunde \& Informatica, Amsterdam, The Netherlands \\
    \textsuperscript{\rm 3} Leiden University, Leiden, The Netherlands \\
    \textsuperscript{\rm 4} University of Amsterdam, Amsterdam, The Netherlands. \\
    \texttt{\{zhangwenhao,shiguang.wu\}@mail.sdu.edu.cn}, \\
    \texttt{\{mengqi.zhang,chenzhumin,renpengjie\}@sdu.edu.cn},\\
    \texttt{jiahuan.pei@cwi.nl},
    \texttt{\ z.ren@liacs.leidenuniv.nl},
    \texttt{\ m.derijke@uva.nl}
}
\title{ExcluIR: Exclusionary Neural Information Retrieval}
\newtheorem{definition}{Definition}
\newtheorem{assumption}{Assumption}
\newtheorem{claim}{Claim}
\newcommand{\qab}{q_{A,B}}
\newcommand{\qba}{q_{B,A}}
\newcommand{\da}{d_{A}}
\newcommand{\db}{d_{B}}
\newcommand{\innerprod}[2]{\langle #1, #2 \rangle}
\begin{document}
\maketitle
\begin{abstract}
Exclusion is an important and universal linguistic skill that humans use to express what they do not want.
However, in information retrieval community, there is little research on exclusionary retrieval, where users express what they do not want in their queries.
In this work, we investigate the scenario of exclusionary retrieval in document retrieval for the first time.
We present \acs{ExcluIR}, a set of resources for exclusionary retrieval, consisting of an evaluation benchmark and a training set for helping retrieval models to comprehend exclusionary queries.
The evaluation benchmark includes 3,452 high-quality exclusionary queries, each of which has been manually annotated.
The training set contains 70,293 exclusionary queries, each paired with a positive document and a negative document.
We conduct detailed experiments and analyses, obtaining three main observations: 
(1) Existing retrieval models with different architectures struggle to effectively comprehend exclusionary queries; 
(2) Although integrating our training data can improve the performance of retrieval models on exclusionary retrieval, there still exists a gap compared to human performance;
(3) Generative retrieval models have a natural advantage in handling exclusionary queries.
To facilitate future research on exclusionary retrieval, we share the benchmark and evaluation scripts on \url{https://github.com/zwh-sdu/ExcluIR}.
\end{abstract}
\section{Introduction}
\acresetall
\begin{figure}[t]
    \centering
    \includegraphics[scale=0.4]{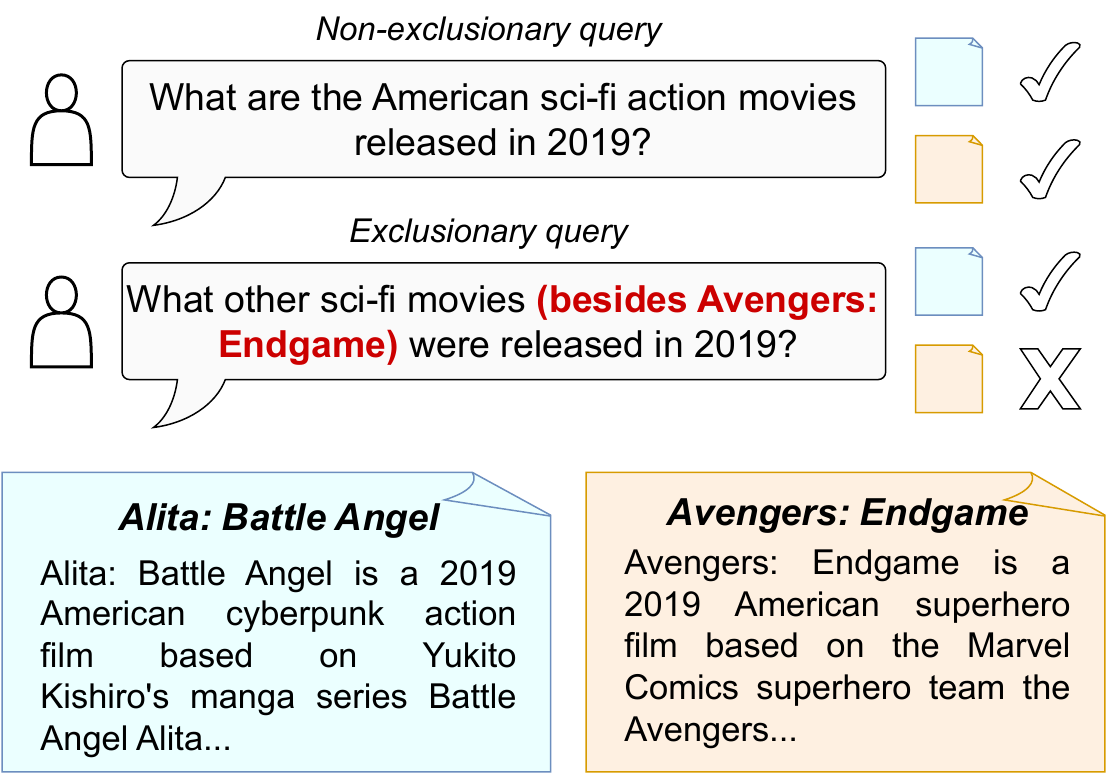}
    \caption{A comparison between non-exclusionary and exclusionary queries. 
    Exclusionary queries often specify content to be excluded (e.g., ``Avengers: Endgame'') to express the user's requirements for omitting certain information.
    In this case, if the retrieval system fails to comprehend the exclusionary nature of a query (e.g., one containing the term ``besides,'') it will produce retrieval results that users do not desire.
    }
    \label{fig1}
\vspace{-3mm}
\end{figure}
Selective attention~\cite{treisman1964selective, attention, cherry2020we}, defined as the ability to focus on relevant information while disregarding irrelevant information, plays a crucial role in shaping user's search behaviors.
This principle, originating from cognitive psychology, not only shapes human perception of the environment but also extends its influence to interactions with information retrieval systems.
When searching for information, users can express a desire to exclude certain information.
We refer to this phenomenon as \textit{exclusionary retrieval}, where users explicitly indicate their preference to exclude specific information.

Exclusionary retrieval emphasizes a crucial need for precision and relevance in information retrieval.
It shows how users leverage their knowledge and expectations to find information that meets their specific needs.
Therefore, the failure to understand exclusionary queries can present a potentially serious problem.
For example, as shown in Figure~\ref{fig1}, imagine a user searching for movies in the retrieval system. He poses a query like ``What other sci-fi movies (besides Avengers: Endgame) were released in 2019?''
If the retrieval system cannot correctly address this exclusionary requirement, it can result in retrieval results containing content irrelevant to the user's interests (e.g., the movie ``Avengers: Endgame''), thus reducing user satisfaction.

Research on exclusionary retrieval remains relatively overlooked.
Early studies mainly focus on keyword-based methods~\cite{nakkouzi1990query, mcquire1998ambiguity, harvey2003challenge}.
The key idea is to construct boolean queries that include negation terms, which is essentially a post-processing strategy.
However, these methods exhibit limitations due to their reliance on structured queries, making them unsuitable for more diverse and complex natural language queries.
Although recent work has explored the impact of negation in modern retrieval models~\cite{rokach2008negation, koopman2010analysis, weller-etal-2024-nevir}, their focus is on comprehending the negation semantics within documents rather than the exclusionary nature of queries.
At present, there is no evaluation dataset to assess the capability of retrieval models in exclusionary retrieval.

To address this, our first contribution in this paper is the introduction of the resources for exclusionary retrieval, namely ExcluIR.
ExcluIR contains an evaluation benchmark to assess the capability of retrieval models in exclusionary retrieval, while also providing a training dataset that includes exclusionary queries.
The dataset is built based on HotpotQA~\cite{yang2018hotpotqa}.
We first use ChatGPT\footnote{\url{https://platform.openai.com/docs/models/gpt-3-5}} to generate an exclusionary query for two given relevant documents, requiring that only one document contains the answer while explicitly rejecting information from the other document.
Subsequently, we employ 17 workers to check each data instance in the benchmark to ensure data quality.
The training set comprises 70,293 exclusionary queries, while the benchmark includes 3,452 human-annotated exclusionary queries.
This dataset can evaluate whether retrieval models can correctly retrieve documents when dealing with exclusionary queries, providing a new perspective for evaluating retrieval models.

Our second contribution is to investigate the performance of existing retrieval methods with different architectures on exclusionary retrieval, including sparse retrieval~\cite{robertson2009probabilistic, nogueira2019doc2query}, dense retrieval~\cite{karpukhin2020dense, ni2022sentence}, and generative retrieval methods~\cite{bevilacqua2022autoregressive, wang2022neural}.
We conduct extensive experiments and have the following three main observations:
(1) Existing retrieval models with different architectures cannot fully understand the real intent of exclusionary queries;
(2) Generative retrieval models possess unique advantages in exclusionary retrieval, while late interaction models~\cite{khattab2020colbert, santhanam2022colbertv2} like ColBERT have obvious limitations in handling such queries;
(3) Fine-tuning the retrieval models with the training set of ExcluIR can improve the performance on exclusionary retrieval, but the results are still far from satisfactory.
We provide in-depth analyses of these observations.
These conclusions contribute valuable insights for future research on addressing the challenges of exclusionary retrieval.
\section{Dataset Construction}
As depicted in Figure~\ref{fig:constru}, the construction of the ExcluIR dataset involves the following steps:
(1) Initially, we extract document pairs from HotpotQA~\cite{yang2018hotpotqa}, where each data instance consisting of two interrelated documents;
(2) For each document pair, we employ ChatGPT to generate an exclusionary query.
(3) To enhance the diversity of the synthetic queries, we further use ChatGPT to rephrase them;
(4) Finally, to ensure the high quality of the dataset, we establish annotation guidelines and hire workers for manual correction.

\subsection{Collecting documents pairs}
We begin the process by collecting documents from the HotpotQA~\cite{yang2018hotpotqa} dataset, which is designed for multi-hop reasoning in question-answering task.
Each data instance includes two supporting documents that are interrelated. The model needs to comprehend the association between them and extract information from them to answer the question.
We extract two related documents from each data instance to form our document pairs.
In total, we collected 74,293 document pairs. After merging and removing duplicates, we obtained a document collection containing 90,406 documents.

\subsection{Generating exclusionary queries}
To construct our dataset efficiently, we design a prompt carefully to guide ChatGPT in generating exclusionary queries for each pair of documents (see Appendix~\ref{appendix:prompt}).
To ensure that the generated queries cover both positive and negative documents, we design a two-step construction strategy.
Specifically, we first instruct ChatGPT to generate a query relevant to both documents, and then guide ChatGPT to revise this query by adding a constraint to include the semantics of refusal to information from the negative document.

\subsection{Rewriting synthetic queries}
Although the prompt has been carefully adjusted, the generated queries often express the exclusionary phrases in a limited manner, such as ``excluding any information about'', ``except for any information'', and ``without referencing any information about.''
These expressions lack naturalness and deviate from real-world queries.
To increase the diversity and naturalness of the queries, we further instruct ChatGPT to rephrase them.
Then, we partition the ExcluIR dataset obtained in this step into training and test sets.
The test set is further manually corrected to construct the benchmark, which will be described in the following section.

\begin{figure}[t]
    \centering
    \includegraphics[scale=0.42]{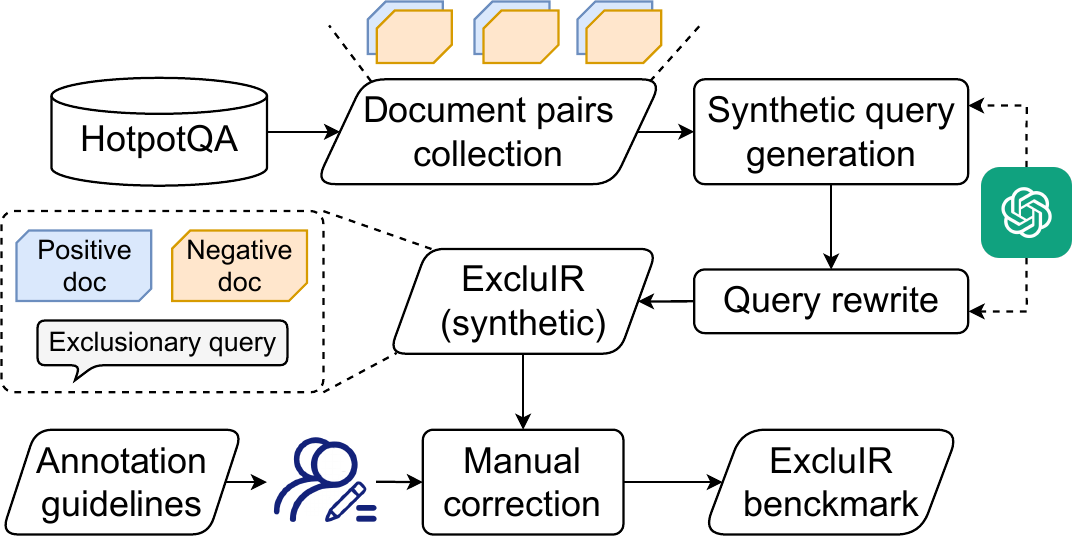}
    \caption{Overview of ExcluIR dataset construction process.}
    \label{fig:constru}
\vspace{-3mm}
\end{figure}

\subsection{Manually correcting data}
To build a reliable ExcluIR benchmark, we hire 17 workers for manual data correction.
We first sample 4,000 instances from the 74,293 data points obtained in the previous step.
Each instance contains two documents along with a synthetic query generated by ChatGPT.
We ask workers to check the synthetic exclusionary query to ensure its naturalness and correctness and they are encouraged to express the exclusionary nature of queries using diverse expressions.
The detailed requirements are provided in Appendix~\ref{appendix:requirements}.
To facilitate the correction process, we construct an online correction system.
In the system, we define three operations for workers to correct each data instance:

\begin{enumerate}[leftmargin=*, itemsep=0pt, topsep=2pt, label={(\arabic*)}]
    \item Criteria Met.
    If the synthetic query already meets the criteria, no further modifications are necessary.
    \item Query Modification.
    If the synthetic query fails to meet the criteria, modify or rewrite the query to align with the requirements.
    \item Discard Data.
    If it is difficult to write a query that meets the criteria based on these two documents, the workers can choose to discard the data.
\end{enumerate}

\subsection{Quality assurance}
We take some measures to ensure data quality:
First, we provide detailed documentation guidelines, including task definition, correction process, and specific criteria for exclusionary queries.
Second, we present multiple examples of exclusionary queries to help workers understand the task and requirements.
Third, we record a video to demonstrate the entire correction process and emphasize the key considerations that need special attention.
Fourth, we adopt a real-time feedback mechanism to allow workers to share the issues they encounter during the correction process.
We discuss these issues and provide solutions accordingly.
Finally, we randomly sample 10\% of the data of each worker for quality inspection.
If there are errors in the sampled data, we will ask the worker to correct the data again.

\begin{figure}[t]
    \centering
    \includegraphics[scale=0.265]{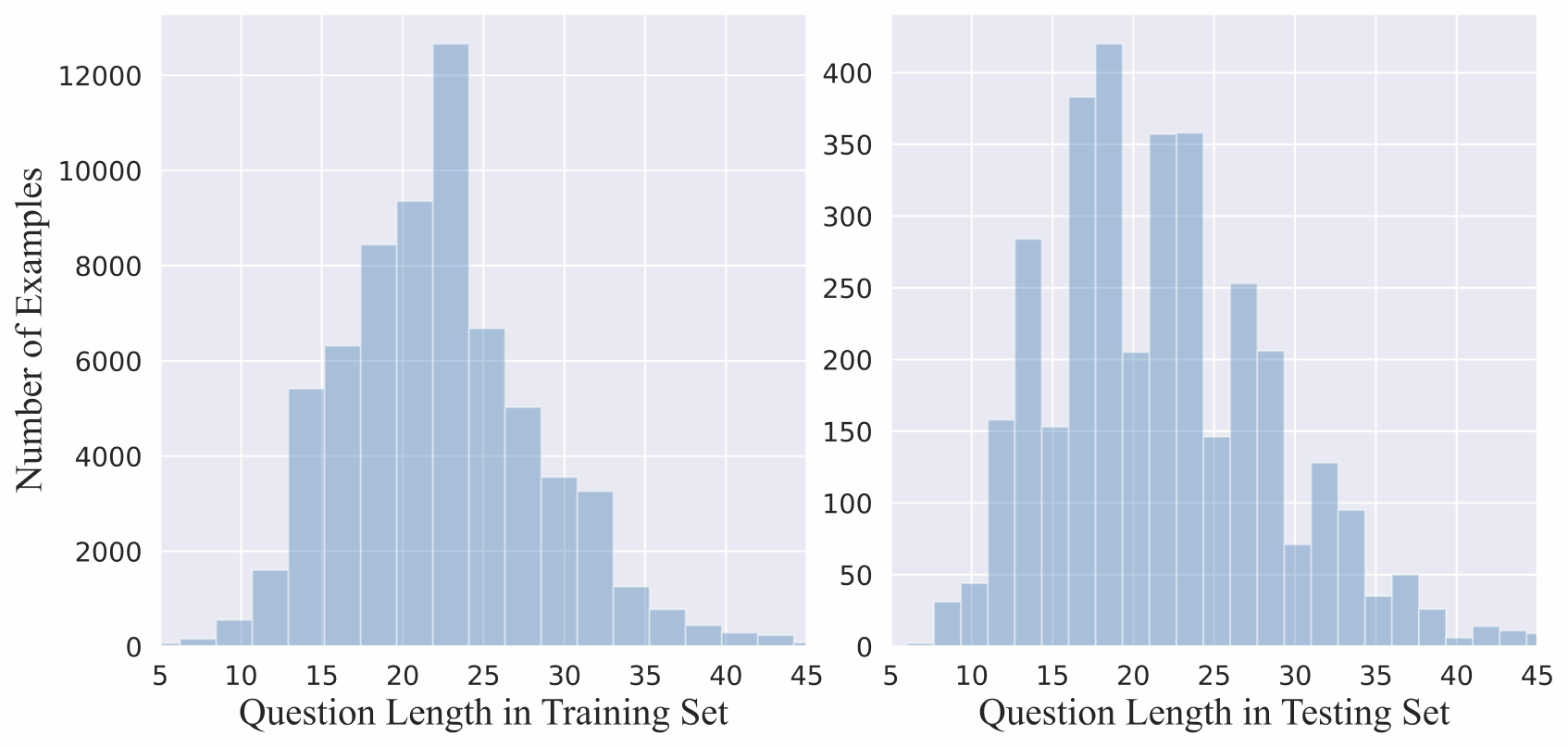}
    \caption{Distribution of the lengths of exclusionary queries in ExcluIR.}
    \label{fig:dataset}
\vspace{-3mm}
\end{figure}

\subsection{Dataset statistics}
Following the dataset construction process described above, we obtain 3,452 human-annotated entries for the benchmark and 70,293 exclusionary queries for the training set.
The average word counts for exclusionary queries in the training set and benchmark are 22.37 and 21.64, respectively.
To further investigate the diversity of data, we visualize the distribution of the lengths of exclusionary queries in Figure~\ref{fig:dataset}.
We show that the lengths of exclusionary queries are diverse, reflecting varying levels of complexity and details.
\section{Experimental Setups}
\subsection{Methods for comparison}
To evaluate the performance of various retrieval models on exclusionary retrieval, we select three types of retrieval models with different architectures: sparse retrieval, dense retrieval, and generative retrieval.

Sparse retrieval methods calculate the relevance score of documents using term matching metrics such as TF-IDF~\cite{robertson1997relevance}.
\begin{itemize}[leftmargin=*, itemsep=0pt, topsep=5pt]
\item \textbf{BM25}~\cite{robertson2009probabilistic} is a classical probabilistic retrieval method based on the normalization of the frequency of the term and the length of the document.
\item \textbf{DocT5Query}~\cite{nogueira2019doc2query} expands documents by generating pseudo queries using a fine-tuned T5 model before building the BM25 index~\cite{raffel2020exploring}.
\end{itemize}

Dense retrieval utilizes \acp{PLM} as the backbones to represent queries and documents as dense vectors for computing relevance scores.
\begin{itemize}[leftmargin=*, itemsep=0pt, topsep=5pt]
\item \textbf{DPR}~\cite{karpukhin2020dense} is a dense retrieval model based on dual-encoder architecture, which uses the representation of the [CLS] token of BERT~\cite{kenton2019bert}.
\item \textbf{Sentence-T5}~\cite{ni2022sentence} uses a fine-tuned T5 encoder model to encode queries and documents into dense vectors.
\item \textbf{GTR}~\cite{ni2022large} has the same architecture as Sentence-T5 and has been pretrained on two billion question-answer pairs collected from the Web.
\item \textbf{ColBERT}~\cite{khattab2020colbert} is a late interaction model that learns embeddings for each token in queries and documents, and then uses a MaxSim operator to calculate the relevance score.
\end{itemize}

Generative retrieval is an end-to-end retrieval paradigm.
\begin{itemize}[leftmargin=*, itemsep=0pt, topsep=5pt]
\item \textbf{GENRE}~\cite{de2020autoregressive} retrieves entities by generating their names through a seq-to-seq model, it can be applied to document retrieval by directly generating document titles.
The original GENRE is trained based on BART as the backbone, and we reproduce it using T5.
\item \textbf{SEAL}~\cite{bevilacqua2022autoregressive} retrieves documents by generating n-grams within them.
\item \textbf{NCI}~\cite{wang2022neural} proposes a prefix-aware weight-adaptive decoder architecture, leveraging semantic document identifiers and various data augmentation strategies like query generation.
\end{itemize}

\subsection{Evaluation metrics}
For the original test queries, we report the commonly used metrics: Recall at rank N~(R@N, N = 1,5,10) and Mean Reciprocal Rank at rank N~(MRR@N, N = 10).
Recall measures the proportion of relevant documents that are retrieved in the top N results.
MRR is the mean of the reciprocal of the rank of the first relevant document.

In ExcluIR, each exclusionary query $q$ has a positive document $d^+$ and a negative document $d^-$.
Thus, the difference between the rank of $d^+$ and the rank of $d^-$ can reflect the retrieval model's capability of comprehending the exclusionary query.
So we report $\mathrm{\Delta}$R@N and $\mathrm{\Delta}$MRR@N, which can be formulated as:
\begin{equation}
    \mathrm{\Delta}\text{R@N} = \text{R@N}(d^+) - \text{R@N}(d^-),
\end{equation}
\begin{equation}
    \mathrm{\Delta}\text{MRR@N} = \text{MRR@N}(d^+) - \text{MRR@N}(d^-).
\end{equation}
In addition, we also report \ac{RR}, which is the proportion of results where $d^+$ is ranked higher than $d^-$.
The expected value of \ac{RR} is 50\% with random ranking.

\begin{table*}[ht]
    \centering
    \caption{Performance of models trained on HotpotQA and tested on HotpotQA and ExcluIR.
    For the evaluation on HotpotQA, we report Recall@2 rather than Recall@1, since each query in HotpotQA has two supporting documents.}
    \vspace{-3mm}
    \renewcommand\arraystretch{1.1}
    \scalebox{0.93}{
    \begin{tabular}{llccccccccc}
    \toprule
        \multirow{2}{*}{Type} & \multirow{2}{*}{Model} & \multicolumn{4}{c}{HotpotQA} & \multicolumn{5}{c}{ExcluIR} \\ \cmidrule(r){3-6} \cmidrule(r){7-11}
        & & R@2 & R@5 & R@10 & MRR & R@1 & MRR & $\mathrm{\Delta}$R@1 & $\mathrm{\Delta}$MRR & RR \\ \hline
        \multirow{2}{*}{\makecell[l]{Sparse\\Retrieval}} & BM25 & 67.16  & 76.65  & 80.98  & 92.47  & 49.68  & 65.17  & 7.82  & 4.66  & 53.48 \\ 
        & DocT5Query & 69.19  & 77.88  & 81.65  & 94.10  & 50.98  & 67.50  & 7.85  & 3.81  & 53.85 \\
        \midrule
        \multirow{4}{*}{\makecell[l]{Dense\\Retrieval}} & DPR & 55.53  & 67.44  & 73.49  & 81.73  & 49.63  & 65.79  & 7.34  & 5.01  & 54.02 \\ 
        & Sentence-T5 & 57.63  & 68.45  & 74.29  & 82.48  & 51.04  & 66.27  & 10.11  & 7.01  & 55.41 \\ 
        & GTR & 61.82  & 73.57  & 79.42  & 85.50  & 54.87  & 70.88  & 14.40  & 8.79  & 57.42 \\ 
        & ColBERT & 73.58  & 83.73  & 87.95  & 94.44  & 54.00  & 71.24  & 10.72  & 6.42  & 55.57 \\
        \midrule
        \multirow{3}{*}{\makecell[l]{Generative\\Retrieval}} & GENRE & 48.87  & 51.67  & 53.24  & 75.25  & 48.03  & 63.22  & 4.35  & 0.13  & 52.10 \\ 
        & SEAL & 60.78  & 72.26  & 78.20  & 85.76  & 51.33  & 67.88  & 11.64  & 7.71  & 55.52 \\ 
        & NCI & 47.60  & 58.14  & 64.37  & 74.59  & 37.22  & 51.37  & 1.97  & 2.29  & 50.93 \\ \bottomrule
    \end{tabular}
    }
\label{tab1}
\vspace{-2mm}
\end{table*}

\begin{table*}[ht]
    \centering
    \renewcommand\arraystretch{1.1}
    \caption{Performance of models trained on NQ320k and tested on NQ320k and ExcluIR.}
    \vspace{-3mm}
    \scalebox{0.93}{
    \begin{tabular}{llccccccccc}
    \toprule
        \multirow{2}{*}{Type} & \multirow{2}{*}{Method} & \multicolumn{4}{c}{NQ320k} & \multicolumn{5}{c}{ExcluIR} \\ \cmidrule(r){3-6} \cmidrule(r){7-11}
        & & R@1 & R@5 & R@10 & MRR & R@1 & MRR & $\mathrm{\Delta}$R@1 & $\mathrm{\Delta}$MRR & RR \\ \hline
        \multirow{2}{*}{\makecell[l]{Sparse\\Retrieval}} & BM25 & 37.96  & 61.24  & 68.86  & 47.86  & 49.68  & 65.17  & 7.82  & 4.66  & 53.48 \\ 
        & DocT5Query & 42.63  & 66.18  & 73.38  & 52.69  & 50.98  & 67.50  & 7.85  & 3.81  & 53.85 \\
        \midrule
        \multirow{4}{*}{\makecell[l]{Dense\\Retrieval}} & DPR & 54.81  & 79.50  & 85.52  & 65.39  & 48.55  & 60.50  & 16.45  & 13.49  & 58.76 \\ 
        & Sentence-T5 & 59.63  & 82.78  & 87.42  & 69.57  & 57.76  & 66.34  & 32.90  & 27.96  & 67.83 \\ 
        & GTR & 62.35  & 84.67  & 89.17  & 71.90  & 59.79  & 69.00  & 34.85  & 28.12  & 68.31 \\ 
        & ColBERT & 60.08  & 84.19  & 89.41  & 70.50  & 57.01  & 70.88  & 20.02  & 15.26  & 59.97 \\ 
        \midrule
        \multirow{3}{*}{\makecell[l]{Generative\\Retrieval}} & GENRE & 56.25  & 71.21  & 74.00  & 62.80  & 31.63  & 37.63  & 11.44  & 10.15  & 58.65 \\ 
        & SEAL & 55.24  & 75.13  & 80.97  & 63.86  & 43.54  & 55.17  & 16.11  & 15.27  & 60.02 \\ 
        & NCI & 60.41  & 76.10  & 80.19  & 67.18  & 31.46  & 38.95  & 15.87  & 16.81  & 56.84 \\ \bottomrule
    \end{tabular}
    }
\label{tab2}
\vspace{-2mm}
\end{table*}

\subsection{Implementation details}
In our experiments, we use Elasticsearch to evaluate BM25 on both raw documents and the documents augmented with DocT5Query.
We train DPR and ColBERT using the bert-base-uncased architecture, train Sentence-T5, GENRE, and NCI using the t5-base architecture, and train SEAL using the BART-large architecture.
We reproduce NCI and SEAL by their official implementations and other methods are reproduced by our own implementations.
For query generation, we leverage the pre-trained model, DocT5Query~\cite{nogueira2019doc2query}, to generate pseudo queries for each document.
For the training of neural retrieval models, the max input length is set to 256 and the batch size is set to 32.
\section{Results and Analyses}
\begin{table*}[ht]
    \centering
    \caption{Performance of models after expanding the training data domain.
    NQ+H(Mix) indicates mixing the NQ320k and HotpotQA datasets for simultaneous training.
    NQ+H(Seq) indicates initial training on the NQ320k dataset followed by continual training on the HotpotQA dataset.}
    \vspace{-3mm}
    \renewcommand\arraystretch{1.1}
    \scalebox{0.93}{
    \begin{tabular}{llccccccccc}
    \toprule
        \multirow{2}{*}{Model} & \multirow{2}{*}{Training Set} & \multicolumn{4}{c}{HotpotQA} & \multicolumn{5}{c}{ExcluIR} \\ \cmidrule(r){3-6} \cmidrule(r){7-11}
        & & R@2 & R@5 & R@10 & MRR & R@1 & MRR & $\mathrm{\Delta}$R@1 & $\mathrm{\Delta}$MRR & RR \\ \hline
        \multirow{3}{*}{\makecell[l]{DPR}} & HotpotQA & 55.53  & 67.44  & 73.49  & 81.73  & 49.63  & 65.79  & 7.34  & 5.01  & 54.02  \\
        & NQ+H(Mix) & 53.19  & 65.05  & 71.52  & 79.57  & 48.93  & 64.47  & 6.95  & 4.59  & 53.94  \\
        & NQ+H(Seq) & \textbf{56.91}  & \textbf{69.02}  & \textbf{74.59}  & \textbf{82.74}  & \textbf{50.87}  & \textbf{67.12}  & \textbf{8.66}  & \textbf{5.99}  & \textbf{54.66}  \\
        \midrule
        \multirow{3}{*}{\makecell[l]{Sentence-T5}} & HotpotQA & 57.63  & 68.45  & 74.29  & 82.48  & 51.04  & 66.27  & 10.11  & 7.01  & 55.41 \\ 
        & NQ+H(Mix) & 54.32  & 65.67  & 72.02  & 79.56  & 51.45  & 66.58  & 11.27  & 8.71  & 56.10 \\
        & NQ+H(Seq) & \textbf{58.40}  & \textbf{69.05}  & \textbf{74.72}  & \textbf{82.66}  & \textbf{52.49}  & \textbf{67.82}  & \textbf{12.92}  & \textbf{9.44}  & \textbf{56.82} \\
        \midrule
        \multirow{3}{*}{\makecell[l]{ColBERT}} & HotpotQA & \textbf{73.58}  & \textbf{83.73}  & \textbf{87.95}  & 94.44  & \textbf{53.69}  & \textbf{70.82}  & \textbf{10.64}  & \textbf{6.35}  & \textbf{55.53} \\
        & NQ+H(Mix) & 71.54  & 82.46  & 86.40  & 94.58  & 52.78  & 69.91  & 8.86  & 5.21  & 54.49  \\
        & NQ+H(Seq) & 73.26  & 83.42  & 87.69  & \textbf{94.68}  & 51.27  & 69.21  & 5.82  & 2.10  & 52.93 \\
        \midrule
        \multirow{3}{*}{\makecell[l]{SEAL}} & HotpotQA & 60.78  & 72.26  & 78.20  & 85.76  & \textbf{51.33}  & \textbf{67.88}  & \textbf{11.64}  & \textbf{7.71}  & 55.52  \\
        & NQ+H(Mix) & \textbf{61.65}  & \textbf{72.80}  & \textbf{78.61}  & \textbf{86.39}  & 51.25  & 67.68  & 11.50  & 7.23  & \textbf{55.63}  \\
        & NQ+H(Seq) & 59.86  & 71.19  & 76.88  & 84.30  & 50.52  & 66.73  & 10.77  & 6.79  & 55.36  \\
        \bottomrule
    \end{tabular}
    }
\label{tab3}
\vspace{-2mm}
\end{table*}

In this section, we present six groups of experimental results and analyses to study:
(1) the performance of the existing retrieval models on ExcluIR (Section~\ref{main_results}),
(2) the strategy to improve the performance on ExcluIR, including expanding the training data domain (Section~\ref{add_nq}), incorporating our dataset into the training data (Section~\ref{add_our}), and increasing the size of the model (Section~\ref{model_size}),
(3) the explanation for the superiority of generative retrieval in ExcluIR (Section~\ref{gr_analysis}),
(4) the reason for the limitation of late interaction models like ColBERT in ExcluIR (Section~\ref{colbert_analysis}).

\subsection{How well do existing methods perform on ExcluIR?}
\label{main_results}
To evaluate the performance of various retrieval models trained on existing datasets in ExcluIR, we conduct our experiments on two well-known standard retrieval datasets: \ac{NQ}~\cite{kwiatkowski2019natural} and HotpotQA~\cite{yang2018hotpotqa}.
\ac{NQ} is a large-scale dataset for document retrieval and question answering.
The version we use is NQ320k, which consists of 320k query-document pairs.
HotpotQA is a question-answering dataset that focuses on multi-hop reasoning.
We split the original HotpotQA in the same way as our ExcluIR dataset, resulting in a 70k training set and a 3.5k test set.

The main performance of various methods on the ExcluIR benchmark and other test data are presented in Table~\ref{tab1} and \ref{tab2}.
We have the following observations from the results.

First, although these methods achieve good performance on the standard test data including HotpotQA and NQ320k, their performance on the ExcluIR benchmark is unsatisfactory.
Nearly all models score less than 10\% higher than random ranking on the RR metric.
Despite the Sentence-T5 and GTR models trained on NQ320k achieve the highest $\mathrm{\Delta}$R@1/$\mathrm{\Delta}$MRR/\ac{RR} scores, they are still far from achieving ideal performance.
This is attributed to the fact that negative documents are erroneously retrieved and ranked high, indicating that these models fail to comprehend the exclusionary nature of queries.

Second, sparse retrieval methods demonstrate a significant limitation in comprehending the exclusionary nature of queries, so they have almost no capability to handle ExcluIR.
As shown in Table~\ref{tab2}, the \ac{RR} scores of BM25 and DocT5Query are only 53.48\% and 53.85\%, which are only slightly higher than random ranking.
Their $\mathrm{\Delta}$R@1 and $\mathrm{\Delta}$MRR scores are lower than most neural retrieval models trained on NQ320k.
This is because these methods are based on the lexical match between queries and documents.
This limitation prevents them from focusing on the exclusionary phrases in the query, instead leading to a high relevance score for negative documents.

Third, the diversity of training data impacts the model's ability to comprehend exclusionary queries.
As can be seen from Table~\ref{tab1} and \ref{tab2}, the models trained on NQ320k exhibit better performance on ExcluIR than those trained on HotpotQA.
We believe this is because the queries in NQ320k are more diverse and contain more exclusionary queries.
Therefore, increasing the domain and diversity of training data can be beneficial for exclusionary retrieval.
We will conduct further experimental analysis in Section~\ref{add_nq}.

Furthermore, we also evaluate the performance of additional models trained on different datasets in ExcluIR.
Due to space constraints, these results are presented in Appendix~\ref{appendix:more_results}.

\subsection{How does expanding training data affect the performance?}
\label{add_nq}

To further understand the impact of training data on the performance in exclusionary retrieval, we select representative models from each category for additional experiments.
We extend the experiment in Table~\ref{tab1} by adding the NQ320k dataset to the training data.
We consider two settings for expanding training data: 
``Mix'' means mixing the two datasets for simultaneous training, and ``Seq'' means training on NQ320k with continual training on HotpotQA.
The results in Table~\ref{tab3} show that the impact of expanding the training data domain on ExcluIR varies across different models.
Specifically, we have the following observations.

For the bi-encoder models, including DPR and Sentence-T5, the ``Seq'' strategy results in improved performance on ExcluIR.
We consider that this is because the initial training on the NQ320k enhances the model's general comprehension capabilities, as evidenced by the improved performance on the HotpotQA test set.

However, expanding the training data does not help ColBERT and SEAL achieve better results on ExcluIR.
While ColBERT exhibits competitive performance on two standard datasets, its performance diminishes on ExcluIR.
This is because ColBERT calculates the document relevance score based on token-level matching, leading to the overlooking of exclusionary phrases in queries, which is crucial for exclusionary retrieval.
We visualize the relevance calculation of ColBERT to further understand its performance in Section~\ref{colbert_analysis}.
As for SEAL, the inherent limitation of generative retrieval models in poorly generalizing to new or out-of-distribution documents explains why expanding the training data does not lead to improved performance on ExcluIR~\cite{lee2023nonparametric, mehta2023dsi++}.

Overall, expanding training data does not stably enhance the performance of models on ExcluIR.
We consider the primary reason to be the lack of exclusionary queries in the training data.
Therefore, in the next section, we will investigate the impact of incorporating our training set which consists of exclusionary queries into the training data.

\subsection{How does incorporating our dataset into training data affect the performance?}
\label{add_our}

\begin{figure}[t]
    \centering
    \subfigure{
        \centering
        \includegraphics[width=0.98\linewidth]{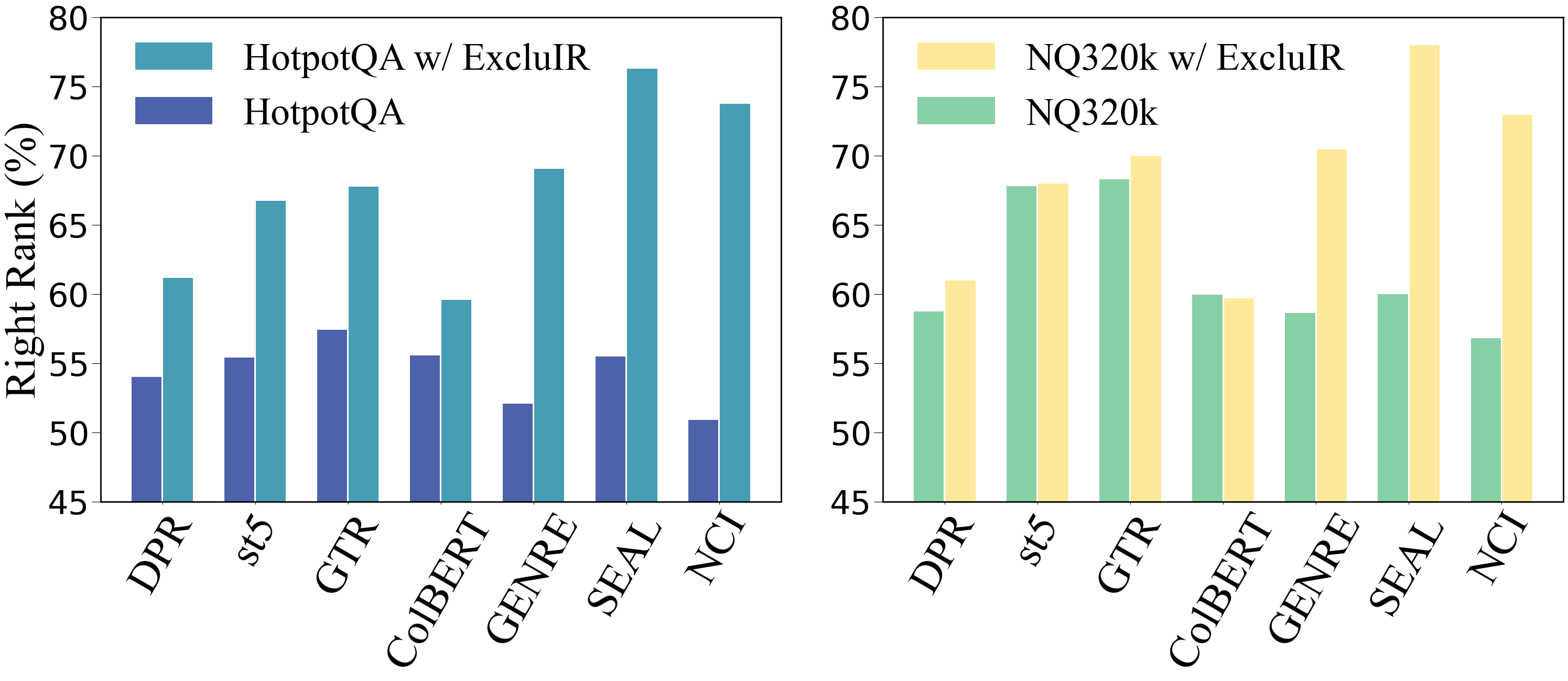}}
    \vspace{-4mm}
    \subfigure{
        \centering
        \includegraphics[width=0.98\linewidth]{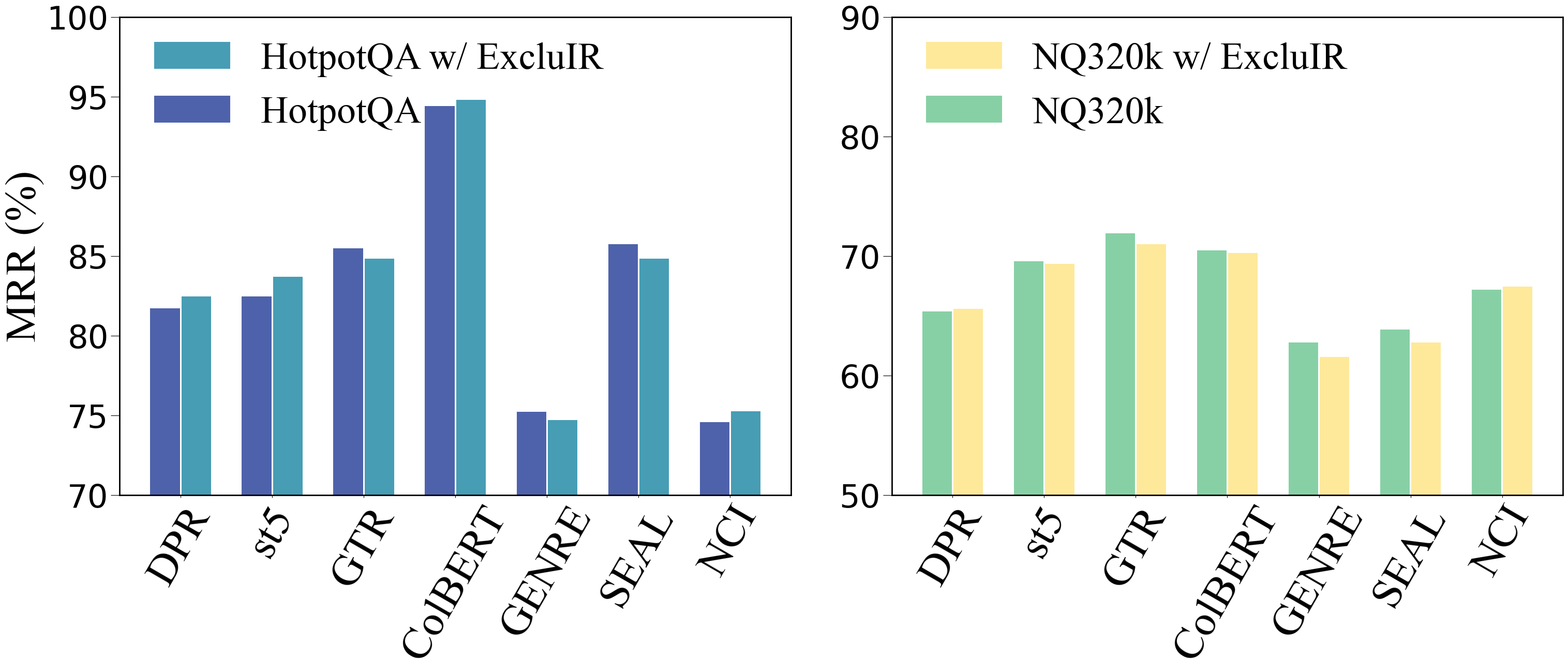}}
    \caption{Performance of models under different training data settings.
    The upper figures show the \ac{RR} score of various models on the ExcluIR benchmark, and the lower figures show the performance of these models on HotpotQA and NQ320k.
    The different colors of the bars represent different training data.
    Full results are presented in Appendix~\ref{appendix:aug_results}.}
    \label{fig:ourdata}
\vspace{-3mm}
\end{figure}

Previous experiments have demonstrated that models trained on HotpotQA and NQ320k perform unsatisfactorily on ExcluIR.
We consider this is partly due to the lack of exclusionary queries in the training data.
Therefore, in this section, we incorporate the ExcluIR training set into the training data to assess its impact on performance.
From the results in Figure~\ref{fig:ourdata}, we have three main observations.

First, merging the ExcluIR training set into the training data can significantly enhance the model's ability to comprehend exclusionary queries.
For instance, with NQ320k as the original dataset, SEAL achieves 18\% improvement (60.02\% vs. 78.02\%) in RR by integrating the ExcluIR training set, with only 1.08\% decrease (63.86\% vs. 62.78\%) in performance of original test data.
This is because the ExcluIR training set contains a large number of exclusionary queries, which can help the retrieval model to better comprehend the exclusionary nature of queries.

Second, when training data contain exclusionary queries, generative retrieval methods are more adept at learning the exclusionary nature of queries compared to dense retrieval methods.
As shown in Figure~\ref{fig:ourdata}, although dense retrieval models trained on two original datasets perform better on ExcluIR, the augmentation of the ExcluIR training set leads to a greater improvement in generative retrieval models, ultimately surpassing dense retrieval methods overall.
On average, generative retrieval models, including GENRE, SEAL, and NCI, achieve a 17.75\% improvement, in contrast to the average 4.77\% improvement observed in dense retrieval models.
This is because the generative retrieval model is more suitable for capturing the complex relationships between queries and documents in terms of model architecture and training objectives.
We present a more detailed analysis in Section~\ref{gr_analysis}.

Third, consistent with the conclusion in Section~\ref{add_nq}, ColBERT fails to achieve satisfactory performance, even after fine-tuning on ExcluIR.
As demonstrated in Figure~\ref{fig:ourdata}, among the models trained with the ExcluIR training set, ColBERT exhibits the lowest performance, with \ac{RR} score of 59.59\% on HotpotQA w/ ExcluIR and 59.71\% on NQ320k w/ ExcluIR.
As mentioned in Section~\ref{add_nq}, the relevance score calculation method utilized by ColBERT is not conducive to handling exclusionary queries.
We will provide a more detailed analysis in Section~\ref{colbert_analysis}.

\subsection{How does the model size affect the performance?}
\label{model_size}
To analyze the impact of model size on the performance of ExcluIR, we increase model sizes of DPR, sentence-t5, GENRE, NCI and train them on different datasets.
Specifically, for DPR, we utilize two variants: bert-base-uncased and bert-large-uncased.
For sentence-t5, GENRE and NCI, we adopt t5-base and t5-large.

The results are presented in Table~\ref{tab4}.
We note that increasing the model size improves performance on ExcluIR when the training data includes exclusionary queries.
This is consistent with the observations in \cite{ravichander2022condaqa}, which shows that larger models are better at understanding the implications of negated statements in documents.

However, when training on two original datasets, increasing the model size does not always lead to improved performance on ExcluIR.
The results in Table~\ref{more_results} support this observation.
For example, the performance of monot5-base is inferior to monot5-large, but the performance of stsb-roberta-large decreases significantly compared to stsb-roberta-base.
This indicates that simply increasing model size cannot solve the challenges of exclusionary retrieval, we should investigate building more training data and proposing new training strategies.

\begin{table*}[!ht]
    \centering
    \caption{Performance with different model sizes on ExcluIR.
    For DPR, the base version is bert-base-uncased, and the large version is bert-large-uncased.
    For sentence-t5, GENRE and NCI, the base version is t5-base, and the large version is t5-large.
    $\textcolor{teal}{\uparrow}$ indicates that an increase in model size improves performance, while $\textcolor{purple}{\downarrow}$ indicates the opposite.}
    \vspace{-3mm}
    \renewcommand\arraystretch{1.1}
    \scalebox{0.93}{
    \begin{tabular}{llccccccl}
    \toprule
        \multirow{2}{*}{\makecell[l]{Training\\set}} & \multirow{2}{*}{Model} & \multicolumn{3}{c}{Base} & \multicolumn{3}{c}{Large} \\ \cmidrule(r){3-5} \cmidrule(r){6-8}
        & & $\mathrm{\Delta}$R@1 & $\mathrm{\Delta}$MRR & RR & $\mathrm{\Delta}$R@1 & $\mathrm{\Delta}$MRR & RR \\ \cmidrule(r){1-8}
        \multirow{4}{*}{\makecell[l]{HotpotQA}} & DPR & 7.34  & 5.01  & 54.02  & 8.00  & 6.22  & 54.25 & $\textcolor{teal}{\uparrow}$ \\
        & Sentence-T5 & 10.11  & 7.01  & 55.41  & 7.21  & 5.23  & 53.78 & $\textcolor{purple}{\downarrow}$ \\
        & GENRE & 4.35  & 0.13  & 52.10  & -1.71  & -3.09  & 49.01 & $\textcolor{purple}{\downarrow}$ \\ 
        & NCI & 1.97 & 2.29 & 50.93  & 1.05  & 1.41  & 50.64 & $\textcolor{purple}{\downarrow}$ \\  \cmidrule(r){1-8}
        \multirow{4}{*}{\makecell[l]{HotpotQA\\w/ ExcluIR}} & DPR & 21.32  & 14.93  & 61.19  & 24.55  & 17.88  & 62.63 & $\textcolor{teal}{\uparrow}$ \\
        & Sentence-T5 & 33.78  & 24.49  & 66.75  & 37.05  & 26.50  & 69.01 & $\textcolor{teal}{\uparrow}$ \\
        & GENRE & 38.71  & 18.34  & 69.07  & 42.15  & 20.20  & 70.96 & $\textcolor{teal}{\uparrow}$ \\
        & NCI & 42.29 & 38.38 & 73.75 & 43.74 & 38.56 & 73.61 & $\textcolor{teal}{\uparrow}$ \\ \cmidrule(r){1-8}
        \multirow{4}{*}{\makecell[l]{NQ320k}} & DPR & 16.45  & 13.49  & 58.76  & 20.83  & 17.16  & 61.62 & $\textcolor{teal}{\uparrow}$ \\
        & Sentence-T5 & 32.90  & 27.96  & 67.83  & 34.36  & 29.94  & 69.02 & $\textcolor{teal}{\uparrow}$ \\
        & GENRE & 11.44  & 10.15  & 58.65  & 11.03  & 8.88  & 55.82 & $\textcolor{purple}{\downarrow}$ \\
        & NCI & 15.87 & 16.81 & 56.84 & 21.27 & 22.86 & 62.54 & $\textcolor{teal}{\uparrow}$ \\ \cmidrule(r){1-8}
        \multirow{4}{*}{\makecell[l]{NQ320k\\w/ ExcluIR}} & DPR & 21.52  & 16.38  & 61.00  & 25.52  & 19.15  & 63.47 & $\textcolor{teal}{\uparrow}$ \\
        & Sentence-T5 & 34.47  & 26.19  & 68.00  & 37.34  & 28.70  & 69.65 & $\textcolor{teal}{\uparrow}$ \\
        & GENRE & 41.19  & 20.31  & 70.48  & 46.04  & 23.24  & 72.86 & $\textcolor{teal}{\uparrow}$ \\
        & NCI & 41.13 & 39.92 & 72.97 & 43.13 & 41.86 & 74.45 & $\textcolor{teal}{\uparrow}$ \\ \bottomrule
    \end{tabular}
    }
\label{tab4}
\vspace{-2mm}
\end{table*}

\begin{figure}[t]
    \centering
    \includegraphics[scale=0.49]{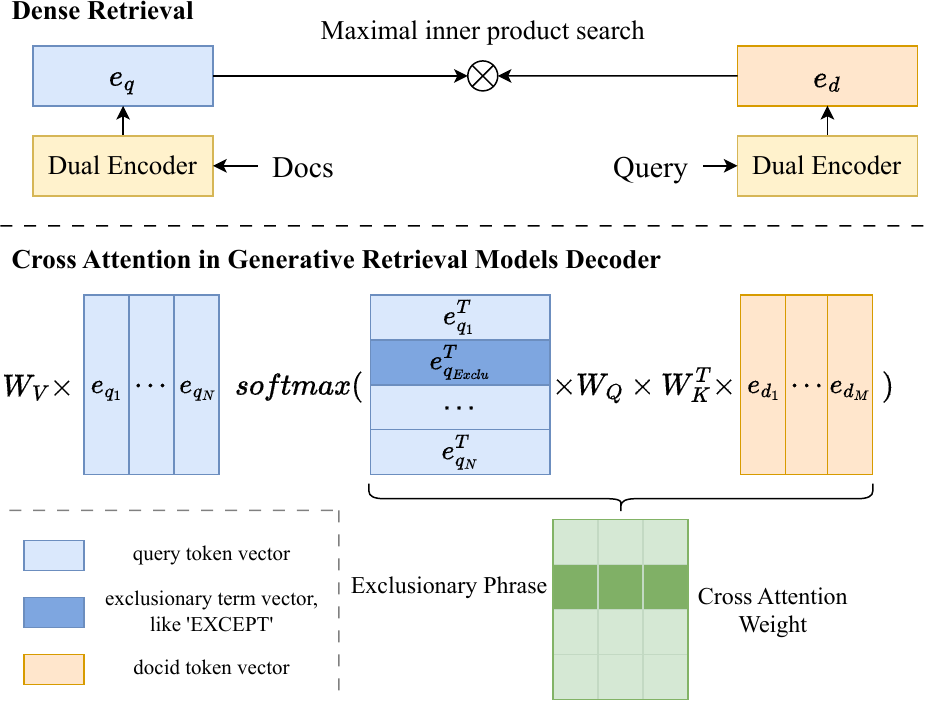}
    \caption{Summary of the analysis that shows the differences between dense retrieval and generative retrieval models in handling ExcluIR.}
    \label{fig:gr}
\vspace{-2mm}
\end{figure}

\subsection{Why are generative retrieval models superior in ExcluIR?}
\label{gr_analysis}

Generative retrieval models have inherent advantages in comprehending exclusionary queries.
We try to analyze and explain the reason from the architecture of generative models.

First, as a comparison, we show that bi-encoder models have a representation bottleneck for exclusionary queries.
When two documents are similar but have some differences that the user would like to distinguish, it is difficult to ensure that the vector representation of the query remains distant from the negative document while closely aligning with the positive document. 
This representation bottleneck prevents the model from correctly comprehending the true intent of the query.
We present this proof in Appendix~\ref{appendix:proof_bi}.

Generative retrieval models adopt a sequence-to-sequence framework, such as T5 or BART, which estimates the probability of generating the document IDs given the query using a conditional probability model: $P(d|q)$.
When generating document IDs, multiple cross-attention layers in the decoder can capture the token-level semantic information in the query, a phenomenon also explored by~\citet{wu2024generative}.
Assuming the decoder consists of $L$ layers, for the $l$-th layer ($0\leq l< L$), the cross-attention layer is given by:
\begin{equation}
    S^{(l+1)} = \text{softmax}\left(\frac{Q^{(l)}K^{(l)T}}{\sqrt{d_k}}\right)V^{(l)},
\end{equation}
where $Q^{(l)} = W^{(l)}_qS^{(l)}$, $K^{(l)} = W^{(l)}_kH^{(l)}_{q}$, $V^{(l)} = W^{(l)}_vH^{(l)}_{q}$, and $H^{(l)}_{q} = \left[e_{q_1},\cdots,e_{q_N}\right]$ are query token vectors generated by encoder, $S^{(l)} = \left[e_{d_1},\cdots,e_{d_M}\right] $ are generated embedding vectors for docid tokens at $l$-th layer, $W^{(l)}_q$, $W^{(l)}_k$ and $W^{(l)}_v$ are learnable cross-attention weight matrices.
We visualize the cross attention in generative models to summarize our analysis.
As shown in Figure~\ref{fig:gr}, the multi-level cross-attention mechanism allows the model to focus intensively on key terms in the query, including exclusionary phrases (highlighted in dark green).
Thus, even when faced with queries with complex semantics, generative retrieval models are capable of effectively capturing the query intent.

\subsection{Why does ColBERT underperform in ExcluIR?}
\label{colbert_analysis}

\begin{figure*}[t]
    \centering
    \includegraphics[scale=0.335]{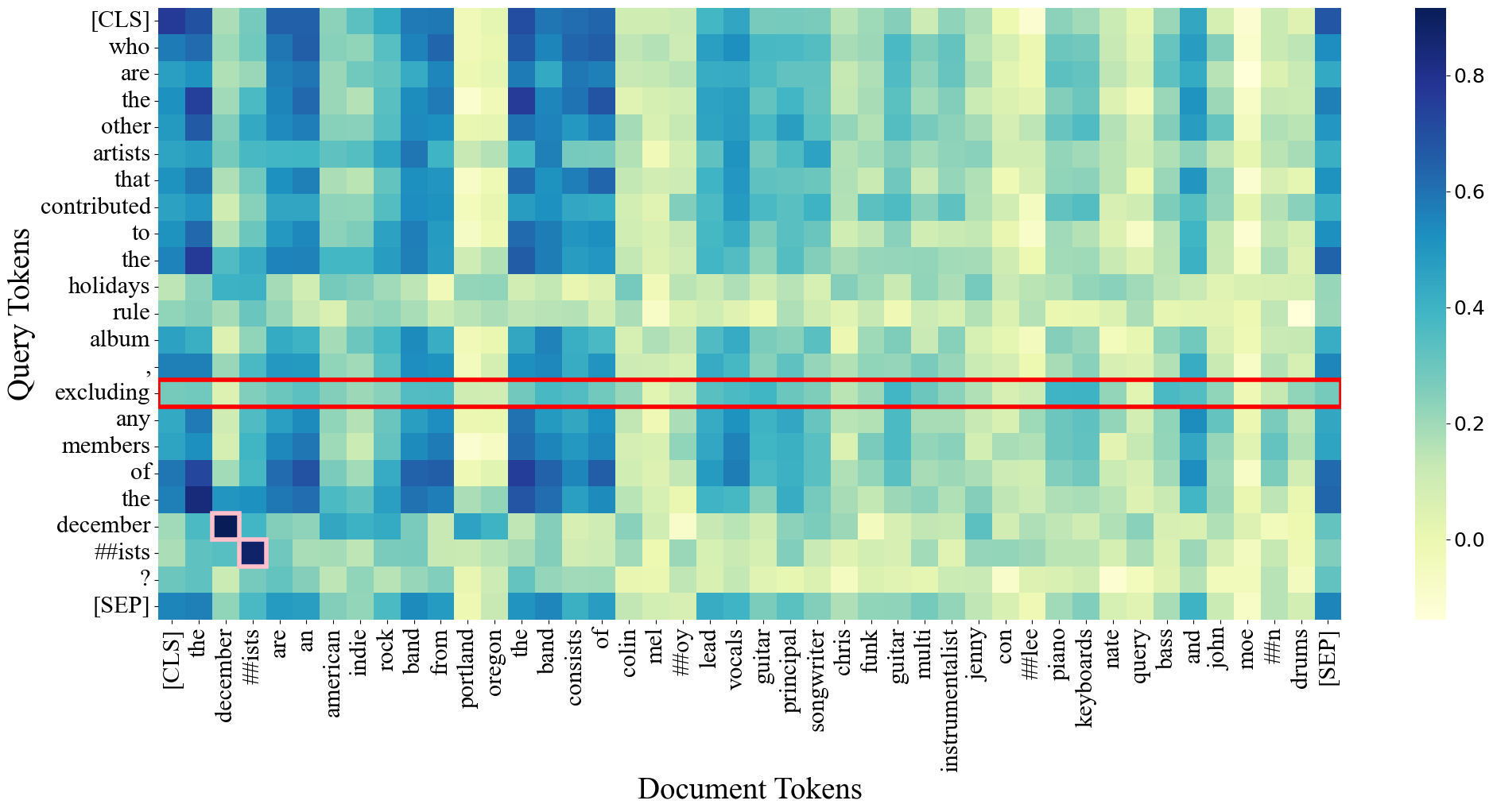}
    \caption{A relevance calculation visualization between query and negative document of ColBERT.
    Each value in the heatmap represents the result of the dot product between the query token vector and the negative document token vector.
    The red highlight indicates the relevance of the token `excluding' in the query to each token in the negative document,
    and the pink highlights indicate the token with the highest relevance score.}
    \label{fig5}
\end{figure*}

Late interaction models like ColBERT struggle to comprehend the exclusionary nature of queries.
From the previous experimental results, we can see that ColBERT performs worse than other neural retrieval models in ExcluIR.
As ColBERT introduces a late interaction architecture, it calculates document relevance scores based on the matching of token-level vectors between queries and documents.
Consequently, the exclusionary phrases in queries pose a challenge for matching with document tokens.

As we can see in Figure~\ref{fig5}, the token `exclude' in the query exhibits relatively low relevance with every token in the negative document.
This indicates that ColBERT barely comprehends the true intent of the query.
And we can also notice that `decemberists' appears both in the query and negative document, contributing a very high relevance score, which is disadvantageous for exclusionary retrieval.
Although the `decemberists' band is mentioned in the query, the intent of the query is to avoid retrieving information about this band.
Therefore, ColBERT inherently lacks the capability to comprehend the queries with complex intentions, limiting its effectiveness in ExcluIR.
We present more cases in Appendix~\ref{appendix:colbert_case}.
\section{Related Work}
Early studies in exclusionary retrieval primarily focus on keyword-based methods.
These approaches typically treat user queries as logical expressions of boolean operations~\cite{nakkouzi1990query, strzalkowski1995natural, mcquire1998ambiguity, harvey2003challenge}.
However, these methods depend on explicit and deterministic rules, lack the flexibility to handle subtle and conditional exclusions, and are unsuitable for more realistic retrieval scenarios.
In addition, there is a similar retrieval task, known as argument retrieval~\cite{wachsmuth2018retrieval}.
This task aims to retrieve the best counterargument for a given argument on any controversial topic.
But argument retrieval implicitly requires the model to find the counterargument to query, the intention of exclusion is not explicitly expressed in the query.
\citet{wang2022learn} first investigate exclusionary retrieval in \ac{T2VR}.
They demonstrate that existing video retrieval models performed poorly when dealing with queries like ``find shots of kids sitting on the floor and not playing with the dog.''
To the best of our knowledge, there has been no research on exclusionary retrieval in document retrieval.

Another recent work by \citet{weller-etal-2024-nevir} introduces NevIR, a benchmark designed to assess the capability of neural information retrieval systems in handling negation.
NevIR requires retrieval models to rank two documents that differ only in negation, where both documents remain consistent in all other aspects except the key negation.
Similarly, \citet{rokach2008negation, koopman2010analysis} investigate the impact of negation contexts within documents on retrieval performance.
For example, a search for ``headache'' might retrieve patient records containing ``the patient has no symptoms of headache.''
Differently, we focus on exclusionary retrieval, studying whether the retrieval model can comprehend the intent of exclusionary queries.
\section{Conclusion}
In this work, we focus on a common yet insufficiently studied retrieval scenario called exclusionary retrieval, where users explicitly express the information they do not want to obtain.
We have provided the community with a new benchmark, named ExcluIR, which focuses on exclusionary queries that explicitly express the information users do not want to obtain.
We have conducted extensive experiments, which demonstrate that existing retrieval methods with different architectures perform poorly on ExcluIR.
Notably, ExcluIR cannot be solved by simply adding training data domains or increasing model sizes.
Additionally, our analyses indicate that generative retrieval models inherently excel at comprehending exclusionary queries compared with sparse and dense retrieval models.
We hope that this work can inspire future research on ExcluIR.
\section*{Limitations and Future Work}
This work has the following limitations.
First, although the training data we build can significantly improve the performance of various retrieval models on ExcluIR, there is still a considerable gap from human performance (with \ac{RR} score of 100\%).
In future work, we plan to investigate how to make use of the advantages of \ac{GR} to further improve the ability of retrieval models in exclusionary retrieval.
Second, in practical retrieval scenarios, the exclusions in the query can be expressed in different ways.
Some are directly stated within a single-round query, while others are implied within the context of multi-round queries.
For example, users might prefer that the results of the current query do not include content retrieved in previous rounds, even though this intent of exclusion is not directly expressed within the query.
In this work, we have only considered the former scenario, further research is required to explore a broader range of exclusionary retrieval scenarios.
\section*{Ethical Considerations}
We realize the potential risks in the research of ExcluIR, thus, it is necessary to pay attention to the ethical issues.
All raw data collected in this study are sourced from publicly available datasets, with ethical considerations approved by publishers.
In the process of data annotation, all workers are informed of the research objectives in advance.
We did not collect any personal privacy information and all data used in our research is obtained following legal and ethical standards.
\bibliography{anthology,references}
\newpage
\appendix
\section{Prompt templates}
\label{appendix:prompt}
We present the prompt that used to guide ChatGPT generating and rephrase the exclusionary query for each pair of documents in Table~\ref{tab:prompt}.

\section{Requirements for manual correction}
\label{appendix:requirements}
During manual correction, to ensure the quality of data, we provide the following requirements for workers.
\begin{enumerate}[leftmargin=*, itemsep=0pt, topsep=2pt]
    \item The query should be relevant to the positive document.
    \item The query should include an exclusionary constraint to clearly refuse to inquire about the information in the negative document.
    \item The query should contain enough information, not just using a person's name to represent a document.
    \item You should use diverse expressions to express the exclusionary constraint, rather than repetitively using the same terms like `excluding.'
\end{enumerate}

\begin{table*}[ht]
    \centering
    \renewcommand{\arraystretch}{1.2}
    \caption{Prompt templates for query generation and rephrasing.}
    \label{tab:prompt}
    \vspace{-3mm}
    \begin{tabular}{p{0.15\linewidth} p{0.7\linewidth}}
    \toprule
    Task & Prompt template \\
    \hline
    Generation & You will be provided with two documents, and you need to:
    \begin{enumerate}[itemsep=0pt, topsep=2pt]
        \item generate a query that is relevant to both Document 1 and Document 2; and
        \item revise this query to include a constraint or condition that makes it explicitly refuse to inquire about any information in Document 1.
    \end{enumerate}
    Reply Format: \newline
    Query: \newline
    Revised Query: \\
    \hline
    Rephrasing & Rephrase the following query to make it smoother, more reasonable, more natural and more realistic.
    Do not answer this query but just polish it.
    You should make this query more like a real human query, but do not change the semantics of this query.
    \newline\newline
    Query:\newline
    \textit{raw query}
    \newline\newline
    Rewritten Query: \\
    \bottomrule
    \end{tabular}
\vspace{-2mm}
\end{table*}

\section{The experimental results of additional models on ExcluIR}
\label{appendix:more_results}
We present more results in Table~\ref{more_results} showing the performance of various models in ExcluIR.
Most of the models are from sentence-transformers~\cite{reimers2019sentence}, except for RocketQA~\cite{qu2021rocketqa, ren2021rocketqav2} and monot5~\cite{nogueira2020document}.
Since cross-encoder models are used for re-ranking, it is very time-consuming to calculate the relevance score of all documents in the corpus.
Therefore, We first retrieve the top 100 documents using BM25, and then re-rank them.
We find that the Recall@100 of BM25 for positive and negative documents is 95.77\% and 94.74\%, so this strategy can ensure fairness.

\begin{table*}[ht]
    \centering
    \caption{The performance of various models on ExcluIR.
    Training Data indicates the source of training data for the model, and Params indicates the number of parameters in the model.}
    \label{more_results}
    \vspace{-3mm}
    \renewcommand\arraystretch{1.1}
    \begin{tabular}{llllccc}
    \toprule
        Type & Training Data & Params & Model & $\mathrm{\Delta}$R@1 & $\mathrm{\Delta}$MRR & RR \\ \hline
        \multirow{11}{*}{\makecell[l]{Bi-\\Encoders}} & MSMarco & 218M & RocketQA v1 & 31.62 & 24.94 & 65.13 \\
        & NQ & 218M & RocketQA v1 & 25.09 & 21.47 & 61.31 \\
        & NQ & 218M & RocketQA v2 & 17.93 & 15.74 & 53.61 \\
        & Multi-Datasets & 23M & all-MiniLM-L6-v2 & 26.41 & 19.42 & 62.09 \\ 
        & Multi-Datasets & 33M & all-MiniLM-L12-v2 & 27.62 & 21.05 & 63.29 \\
        & Multi-Datasets & 109M & all-mpnet-base-v2 & \textbf{39.32} & \textbf{32.01} & \textbf{69.04} \\
        & Multi-Datasets & 82M & all-distilroberta-v1 & 37.56 & 27.63 & 67.98 \\
        & Multi-Datasets & 66M & multi-qa-distilbert-cos-v1 & 25.90 & 18.42 & 61.77 \\
        & Multi-Datasets & 109M & multi-qa-mpnet-base-dot-v1 & 37.80 & 29.04 & 68.41 \\
        & Multi-Datasets & 23M & multi-qa-MiniLM-L6-cos-v1 & 24.11 & 17.87 & 60.97 \\
        & Multi-Datasets & 278M & \makecell[l]{paraphrase-multilingual-\\mpnet-base-v2} & 33.72 & 27.61 & 65.85 \\ \hline
        \multirow{11}{*}{\makecell[l]{Cross-\\Encoders}} & MSMarco & 23M & ms-marco-MiniLM-L-6-v2 & 27.56 & 16.61 & 63.35\\
        & MSMarco & 33M & ms-marco-MiniLM-L-12-v2 & 27.08 & 16.47 & 63.12 \\
        & SQuAD & 109M & qnli-electra-base & 23.60 & \textbf{27.76} & 53.87 \\
        & STSB & 125M & stsb-roberta-base & 13.48 & 15.13 & 59.38 \\
        & STSB & 355M & stsb-roberta-large & 6.50 & 8.26 & 50.27 \\
        & MSMarco & 223M & monot5-base-msmarco-10k & 32.54 & 18.87 & 65.85 \\
        & MSMarco & 738M & monot5-large-msmarco-10k & \textbf{42.80} & 23.71 & \textbf{70.91} \\
        & MSMarco & 2852M & monot5-3b-msmarco-10k & 42.17 & 23.35 & 70.74 \\
        & MSMarco & 109M & RocketQA-v2\_marco\_ce & 37.22 & 21.11 & 68.24 \\
        & MSMarco & 335M & RocketQA-v1\_marco\_ce & 40.39 & 22.40 & 70.02 \\
        & NQ & 335M & RocketQA-v1\_nq\_ce & 41.56 & 22.98 & 70.48 \\
    \bottomrule
    \end{tabular}
\vspace{-2mm}
\end{table*}

\section{The complete results of training with ExcluIR on HotpotQA and NQ320k}
\label{appendix:aug_results}

Table~\ref{aug_results_h} and \ref{aug_results_n} show full results of retrieval models performance after augmenting the HotpotQA and NQ320k with the ExcluIR training set, respectively.

\begin{table*}[ht]
    \centering
    \caption{The complete results of the impact of augmenting HotpotQA with ExcluIR training set.}
    \vspace{-3mm}
    \renewcommand\arraystretch{1.1}
    \scalebox{0.93}{
    \begin{tabular}{llccccccccc}
    \toprule
        \multirow{2}{*}{Model} & \multirow{2}{*}{Training Data} & \multicolumn{4}{c}{HotpotQA} & \multicolumn{5}{c}{ExcluIR} \\ \cmidrule(r){3-6} \cmidrule(r){7-11}
        & & R@2 & R@5 & R@10 & MRR & R@1 & MRR & $\mathrm{\Delta}$R@1 & $\mathrm{\Delta}$MRR & RR \\ \hline
        \multirow{2}{*}{DPR} & HotpotQA & 55.53  & 67.44  & 73.49  & 81.73  & 49.63  & 65.79  & 7.34  & 5.01  & 54.02 \\ 
        & H. w/ ExcluIR & \textbf{58.26}  & \textbf{70.48}  & \textbf{76.81}  & \textbf{83.60}  & \textbf{59.30}  & \textbf{73.20}  & \textbf{24.45}  & \textbf{17.88}  & \textbf{62.63} \\
        \midrule
        \multirow{2}{*}{Sentence-T5} & HotpotQA & 57.63  & 68.45  & 74.29  & 82.48  & 51.04  & 66.27  & 10.11  & 7.01  & 55.41 \\ 
        & H. w/ ExcluIR & \textbf{58.65}  & \textbf{69.60}  & \textbf{75.48}  & \textbf{83.72}  & \textbf{63.73}  & \textbf{75.85}  & \textbf{33.78}  & \textbf{24.49}  & \textbf{66.75} \\
        \midrule
        \multirow{2}{*}{GTR} & HotpotQA & 61.82  & 73.57  & 79.42  & \textbf{85.50}  & 54.87  & 70.88  & 14.40  & 8.79  & 57.42  \\ 
        & H. w/ ExcluIR & \textbf{61.99}  & \textbf{73.83}  & \textbf{79.45}  & 84.86  & \textbf{64.98}  & \textbf{77.75}  & \textbf{34.85}  & \textbf{23.85}  & \textbf{67.79} \\
        \midrule
        \multirow{2}{*}{ColBERT} & HotpotQA & \textbf{73.58}  & \textbf{83.73}  & \textbf{87.95}  & 94.44  & 54.00  & 71.24  & 10.72  & 6.42  & 55.57 \\ 
        & H. w/ ExcluIR & 72.90  & 83.26  & 87.50  & \textbf{94.80}  & \textbf{58.14}  & \textbf{74.95}  & \textbf{18.80}  & \textbf{12.74}  & \textbf{59.59} \\
        \midrule
        \multirow{2}{*}{GENRE} & HotpotQA & \textbf{48.87}  & \textbf{51.67}  & \textbf{53.24}  & \textbf{75.25}  & 48.03  & 63.22  & 4.35  & 0.13  & 52.10 \\ 
        & H. w/ ExcluIR & 48.60  & 51.26  & 53.03  & 74.71  & \textbf{64.98}  & \textbf{72.54}  & \textbf{38.71}  & \textbf{18.34}  & \textbf{69.07}  \\
        \midrule
        \multirow{2}{*}{SEAL} & HotpotQA & \textbf{60.78}  & 72.26  & \textbf{78.20}  & \textbf{85.76}  & 51.33  & 67.88  & 11.64  & 7.71  & 55.52 \\ 
        & H. w/ ExcluIR & 60.34  & \textbf{72.39}  & 77.97  & 84.85  & \textbf{69.03}  & \textbf{78.66}  & \textbf{48.95}  & \textbf{39.55}  & \textbf{76.29}\\
        \midrule
        \multirow{2}{*}{NCI} & HotpotQA & 47.60  & 58.14  & 64.37  & 74.59  & 37.22  & 51.37  & 1.97  & 2.29  & 50.93 \\ 
        & H. w/ ExcluIR & \textbf{47.80}  & \textbf{59.15}  & \textbf{64.75}  & \textbf{75.28}  & \textbf{59.76}  & \textbf{68.90}  & \textbf{42.29}  & \textbf{38.38}  & \textbf{73.75} \\
        \bottomrule
    \end{tabular}
    }
\label{aug_results_h}
\vspace{-2mm}
\end{table*}

\begin{table*}[ht]
    \centering
    \caption{The complete results of the impact of augmenting NQ320k with ExcluIR training set}
    \vspace{-3mm}
    \renewcommand\arraystretch{1.1}
    \scalebox{0.93}{
    \begin{tabular}{llccccccccc}
    \toprule
        \multirow{2}{*}{Model} & \multirow{2}{*}{Training Data} & \multicolumn{4}{c}{NQ320k} & \multicolumn{5}{c}{ExcluIR} \\ \cmidrule(r){3-6} \cmidrule(r){7-11}
        & & R@1 & R@5 & R@10 & MRR & R@1 & MRR & $\mathrm{\Delta}$R@1 & $\mathrm{\Delta}$MRR & RR \\ \hline
        \multirow{2}{*}{DPR} & NQ320k & 54.81  & \textbf{79.50}  & \textbf{85.52}  & 65.39  & 48.55 & 60.50 & 16.45 & 13.49 & 58.76 \\ 
        & N. w/ ExcluIR & \textbf{55.08}  & 79.31  & 85.49  & \textbf{65.58}  & \textbf{55.04} & \textbf{67.89} & \textbf{21.52} & \textbf{16.38} & \textbf{61.00} \\
        \midrule
        \multirow{2}{*}{Sentence-T5} & NQ320k & 59.63  & \textbf{82.78}  & \textbf{87.42}  & \textbf{69.57}  & 57.76 & 66.34 & 32.90 & \textbf{27.96} & 67.83 \\ 
        & N. w/ ExcluIR & \textbf{59.80}  & 81.58  & 87.13  & 69.36  & \textbf{63.09} & \textbf{74.57} & \textbf{34.47} & 26.19 & \textbf{68.00} \\
        \midrule
        \multirow{2}{*}{GTR} & NQ320k & \textbf{62.35}  & \textbf{84.67}  & \textbf{89.17}  & \textbf{71.90}  & 59.79 & 69.00 & 34.85 & 28.12 & 68.31 \\ 
        & N. w/ ExcluIR & 61.44  & 83.82  & 88.34  & 71.01  & \textbf{65.64} & \textbf{76.98} & \textbf{39.05} & \textbf{28.46} & \textbf{69.98} \\
        \midrule
        \multirow{2}{*}{ColBERT} & NQ320k & 60.08  & \textbf{84.19}  & \textbf{89.41}  & \textbf{70.50}  & 57.01 & 70.88 & \textbf{20.02} & \textbf{15.26} & \textbf{59.97} \\ 
        & N. w/ ExcluIR & \textbf{60.20}  & 83.59  & 88.60  & 70.29  & \textbf{57.91} & \textbf{73.52} & 19.30 & 13.05 & 59.71 \\
        \midrule
        \multirow{2}{*}{GENRE} & NQ320k & \textbf{56.25}  & \textbf{71.21}  & \textbf{74.00}  & \textbf{62.80}  & 31.63 & 37.63 & 11.44 & 10.15 & 58.65 \\ 
        & N. w/ ExcluIR & 55.15  & 70.00  & 72.85  & 61.55  & \textbf{65.67} & \textbf{73.01} & \textbf{41.19} & \textbf{20.31} & \textbf{70.48} \\
        \midrule
        \multirow{2}{*}{SEAL} & NQ320k & \textbf{55.24}  & \textbf{75.13}  & \textbf{80.97}  & \textbf{63.86}  & 43.54 & 55.17 & 16.11 & 15.27 & 60.02 \\ 
        & N. w/ ExcluIR & 53.86  & 74.84  & 80.34  & 62.78  & \textbf{70.39} & \textbf{78.40} & \textbf{52.14} & \textbf{43.25} & \textbf{78.02} \\
        \midrule
        \multirow{2}{*}{NCI} & NQ320k & 60.41  & 76.10  & 80.19  & 67.18  & 31.46 & 38.95 & 15.87 & 16.81 & 56.84 \\ 
        & N. w/ ExcluIR & \textbf{60.61}  & \textbf{76.53}  & \textbf{80.55}  & \textbf{67.46}  & \textbf{56.92} & \textbf{64.67} & \textbf{41.13} & \textbf{39.92} & \textbf{72.97} \\
    \bottomrule
    \end{tabular}
    }
\label{aug_results_n}
\vspace{-2mm}
\end{table*}

\section{Limitations of bi-encoder models in ExcluIR with similar positive and negative documents.}
\label{appendix:proof_bi}
Bi-encoder models embed queries and documents into a high-dimensional space to compute the relevance score.
These methods are effective when the semantics of the query and documents are straightforward and do not overlap.
However, in ExcluIR, exclusionary queries contain the semantics of negative documents.
We demonstrate that bi-encoder models struggle to distinguish between positive and negative documents when their vector representations are close in embedding space.
This limitation leads to a bottleneck for bi-encoder models in ExcluIR.
Here is the proof.

\begin{definition}
    $A, B$ can be viewed as queries or documents without loss of generality.

    $q_{A,B}\coloneqq \mathbold{f}(A,B)$ where $\mathbold{f}(A,B)$ is the query encoding vector of query $A$ and negative query $B$.
    $d_{A}\coloneqq \mathbold{g}(A)$ where $\mathbold{g}(\cdot)$ is the document encoding vector function.
    All vectors are normalized to $1$.
    We also define $x,y$ to be $\varepsilon$-close if there exists $\delta\in(0,\frac{1}{9})$ such that $\Pr(\innerprod{x}{y}>1-\varepsilon)>1-\delta$.
\end{definition}

\begin{assumption}
We have certain assumptions.

\begin{itemize}[nosep]
    \item $\da$ and $\db$ are $\varepsilon$-close, which means both $A$ and $B$ are related documents but have some differences that the user would like to distinguish.
    \item $\qab$ and $\da$ are $\varepsilon$-close, which means $\qab$ have good representation to retrieve $\da$.
    Similar is true for $\qba$ and $\db$.
    \item $\innerprod{\qba}{\db}-\innerprod{\qba}{\da}\ge 1-\varepsilon$ w.h.p., which means $\qba$ prefers $\db$ rather than $\da$.
    \item $\varepsilon<3-2\sqrt{2}$.
\end{itemize}
\end{assumption}

\begin{claim}
    We would like to prove that
\begin{itemize}[nosep]
    \item $\innerprod{\qab}{\db}-\innerprod{\qab}{\da}>0$ w.h.p.
\end{itemize}
\end{claim}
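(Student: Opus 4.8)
The plan is to treat the four unit vectors $\qab,\qba,\da,\db$ as living on the unit sphere and to convert every ``$\varepsilon$-close'' hypothesis into a bound on squared Euclidean distance, since for unit vectors $\innerprod{x}{y}>1-\varepsilon$ is equivalent to $\|x-y\|^2<2\varepsilon$. All the stated hypotheses hold with probability $>1-\delta$ individually, so by a union bound over the (constantly many) events they hold simultaneously with probability $>1-c\delta$ for a small constant $c$; since $\delta<\frac19$ we can keep the final failure probability below $1$, which is what ``w.h.p.'' requires here. So the real content is a deterministic geometric inequality: assuming $\|\da-\db\|^2<2\varepsilon$, $\|\qab-\da\|^2<2\varepsilon$, and $\innerprod{\qba}{\db}-\innerprod{\qba}{\da}\ge 1-\varepsilon$, show $\innerprod{\qab}{\db}-\innerprod{\qab}{\da}>0$.

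First I would rewrite the target quantity as $\innerprod{\qab}{\db-\da}$ and split $\qab = \da + (\qab-\da)$, giving
\[
\innerprod{\qab}{\db}-\innerprod{\qab}{\da} = \innerprod{\da}{\db-\da} + \innerprod{\qab-\da}{\db-\da}.
\]
The first term is $\innerprod{\da}{\db}-1 = -\tfrac12\|\da-\db\|^2 > -\varepsilon$. The second term is bounded in absolute value by Cauchy--Schwarz: $|\innerprod{\qab-\da}{\db-\da}| \le \|\qab-\da\|\,\|\db-\da\| < \sqrt{2\varepsilon}\cdot\sqrt{2\varepsilon} = 2\varepsilon$. Wait — that only gives a lower bound of $-\varepsilon - 2\varepsilon = -3\varepsilon < 0$, which is not enough. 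So this crude splitting is too lossy, and I expect the actual argument must genuinely \emph{use} the third hypothesis (that $\qba$ strongly prefers $\db$) together with the closeness of $\da,\db$ and of $\qab,\da$ to pin down the geometry more tightly — for instance, by showing $\qab$ and $\qba$ cannot be too far apart, or by arguing that the preference direction $\db-\da$ is nearly aligned with $\qba$ and hence (via the closeness chain) with $\qab$.

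The main obstacle, then, is extracting mileage from $\innerprod{\qba}{\db}-\innerprod{\qba}{\da}\ge 1-\varepsilon$. I would expand it as $\innerprod{\qba}{\db-\da}\ge 1-\varepsilon$, and combine with $\|\db-\da\| < \sqrt{2\varepsilon}$ and $\|\qba\|=1$: Cauchy--Schwarz says $\innerprod{\qba}{\db-\da}\le\|\db-\da\|<\sqrt{2\varepsilon}$, so we need $1-\varepsilon<\sqrt{2\varepsilon}$, i.e. $\varepsilon^2-4\varepsilon+1<0$, i.e. $\varepsilon>2-\sqrt3$. Combined with the standing assumption $\varepsilon<3-2\sqrt2$ this is an admissible window, but it shows the hypotheses are tight and force $\varepsilon$ into a narrow range; within that range $\|\db-\da\|$ is close to its maximum and $\qba$ is nearly parallel to $\db-\da$. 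I then expect to: (i) deduce $\qba$ is $O(\sqrt\varepsilon)$-close to the unit vector $\tfrac{\db-\da}{\|\db-\da\|}$; (ii) by symmetry of the setup, argue $\qab$ is $O(\sqrt\varepsilon)$-close to $\tfrac{\da-\db}{\|\da-\db\|}$, \emph{or} — more carefully — use the $\varepsilon$-closeness of $\qab$ to $\da$ to locate $\qab$ relative to the $\da,\db$ pair; (iii) plug the resulting angle bounds into $\innerprod{\qab}{\db-\da}$ and check the sign, using $\varepsilon<3-2\sqrt2$ to close the numerics. The delicate point is step (ii): the claim does \emph{not} assume a symmetric preference hypothesis for $\qab$ (that is precisely the conclusion), so I must instead route through $\innerprod{\qab}{\da}>1-\varepsilon$ and the triangle-type inequality on the sphere, controlling $\innerprod{\qab}{\db} = \innerprod{\qab}{\da} + \innerprod{\qab}{\db-\da}$ and showing the correction term, while possibly negative, is smaller in magnitude than the gap $\innerprod{\qab}{\da}-\innerprod{\qab}{\da}$... which is zero, so in fact I need $\innerprod{\qab}{\db}>\innerprod{\qab}{\da}$ directly, meaning the whole weight falls on showing $\innerprod{\qab}{\db-\da}>0$, i.e. $\qab$ lies strictly on the $\db$ side of the bisecting hyperplane of $\da,\db$. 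That, I believe, is forced by: $\qab$ is within angle $\arccos(1-\varepsilon)$ of $\da$, while $\da$ itself makes a known (small, $O(\sqrt\varepsilon)$) angle with that hyperplane on the $\da$ side — so I will need $\arccos(1-\varepsilon)$ to exceed that angle, which is exactly where $\varepsilon<3-2\sqrt2$ enters, and this final quantitative comparison is the crux of the proof.
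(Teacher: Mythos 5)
Your proposal does not reach a proof: after correctly discarding the lossy splitting $\qab=\da+(\qab-\da)$, you outline a multi-step angular argument (steps (i)--(iii)) but never execute the step that actually produces the sign, so the crux is missing. The paper's own argument is a three-line computation built on a decomposition you did not try: write
\begin{equation*}
\innerprod{\qab}{\db}-\innerprod{\qab}{\da}=\innerprod{\qab-\qba}{\db-\da}+\bigl(\innerprod{\qba}{\db}-\innerprod{\qba}{\da}\bigr),
\end{equation*}
bound the first term via Cauchy--Schwarz as $|\innerprod{\qab-\qba}{\db-\da}|\le\sqrt{2}\,\sqrt{2-2\innerprod{\db}{\da}}\le 2\sqrt{\varepsilon}$ using only the $\varepsilon$-closeness of $\da$ and $\db$, and lower-bound the second term by $1-\varepsilon$ using the preference assumption on $\qba$; the total is at least $1-\varepsilon-2\sqrt{\varepsilon}$, which is positive exactly when $\varepsilon<3-2\sqrt{2}$. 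That is where the numerical assumption enters --- not through the angle-versus-bisecting-hyperplane comparison you anticipate, and notably the closeness of $\qab$ to $\da$ is never needed.

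Separately, your Cauchy--Schwarz observation in the second paragraph contains an arithmetic error with real consequences. You correctly derive that the hypotheses force $1-\varepsilon<\sqrt{2\varepsilon}$, i.e.\ $\varepsilon>2-\sqrt{3}\approx 0.268$, and then call the combination with $\varepsilon<3-2\sqrt{2}\approx 0.172$ ``an admissible window.'' It is not: $2-\sqrt{3}>3-2\sqrt{2}$, so the window is empty. What you have actually shown is that the deterministic events behind the first and third assumptions cannot co-occur when $\varepsilon<3-2\sqrt{2}$, even though each is assumed to hold with probability greater than $1-\delta>8/9$ --- the assumption set is internally inconsistent and the claim is vacuously true. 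That is a legitimate criticism of the paper's formalization (its proof never notices this because it invokes the two assumptions in sequence rather than confronting them with each other), but your proposal neither draws that conclusion explicitly nor completes the intended argument; the plan you sketch instead would have to operate inside a parameter range that does not exist.
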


\begin{proof}
    \begin{align}
        &\innerprod{\qab}{\db}-\innerprod{\qab}{\da}\\
        &=\innerprod{\qab-\qba}{\db-\da} \\
        &\ \ +\innerprod{\qba}{\db}-\innerprod{\qba}{\da}.
    \end{align}

    Specifically,
    \begin{align}
        &|\innerprod{\qab-\qba}{\db-\da}| \\
        &\le \|\qab-\qba\|\|\db-\da\|\\
        &\le \sqrt{\|\qab\|^2+\|\qba\|^2}\|\db-\da\|\\
        &= \sqrt{2}\sqrt{2-2\innerprod{\db}{\da}}\\
        &\le 2\sqrt{\varepsilon}
    \end{align}

    Therefore, with probability $1-3\delta$ we have
    \begin{align}
        &\innerprod{\qab}{\db}-\innerprod{\qab}{\da}\\
        &\ge -2\sqrt{\varepsilon}+\innerprod{\qba}{\db}-\innerprod{\qba}{\da}\\
        &\ge -2\sqrt{\varepsilon} + 1 - \varepsilon\\
        &>\,0.
    \end{align}
\end{proof}

\section{Additional cases of ColBERT on negative document relevance scoring}
\label{appendix:colbert_case}
We also present more cases for the analysis of ColBERT in Table~\ref{heatmap1}--\ref{heatmap4}.
As we can see, when calculating the relevance score of the negative document, ColBERT always fails to focus on the exclusionary phrases like `aside from' and `other than.'
Instead, it mistakenly focuses on words with lexical matches, resulting in inflated scores.

\begin{figure*}[ht]
    \centering
    \includegraphics[scale=0.32]{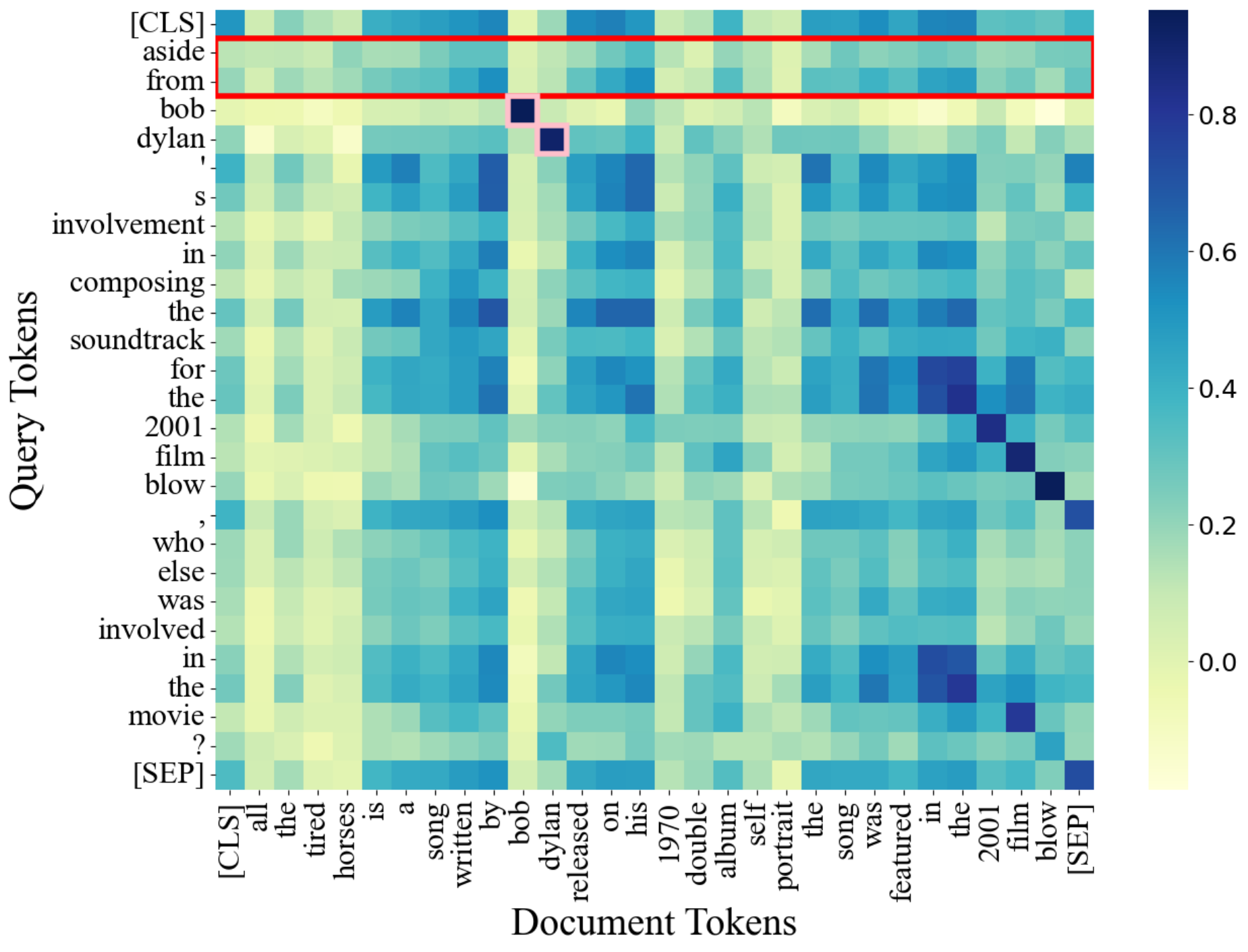}
    \caption{An example of ColBERT on negative document relevance scoring.
    ColBERT overlooks the semantics of `aside from' and instead, due to the presence of lexical matches such as `bob dylan', assigned a high relevance score to this negative document.}
    \label{heatmap1}
\vspace{-2mm}
\end{figure*}
\begin{figure*}[ht]
    \centering
    \includegraphics[scale=0.32]{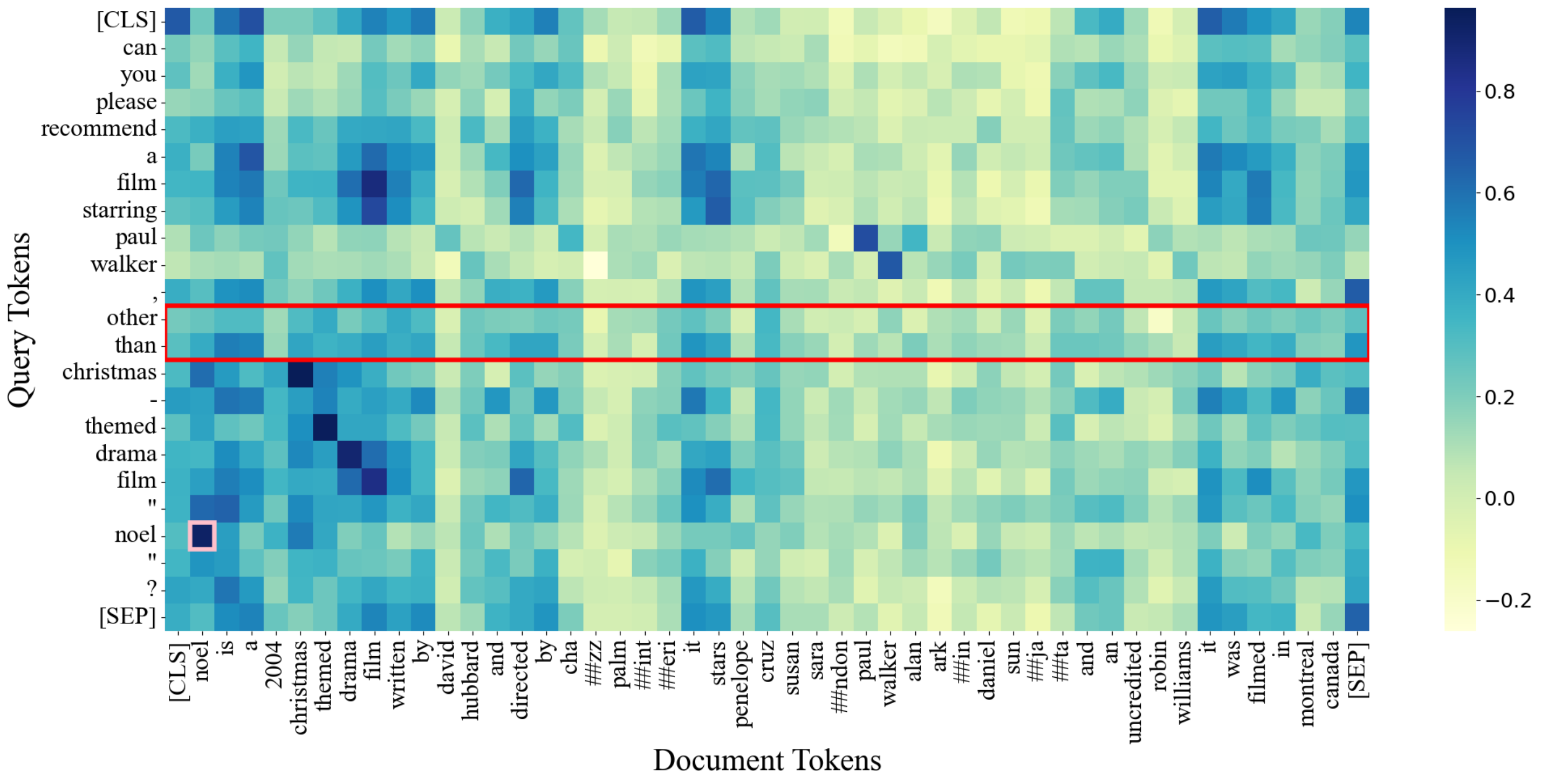}
    \caption{An example of ColBERT on negative document relevance scoring.
    ColBERT overlooks the semantics of `other than' and instead, due to the presence of lexical matches such as `noel', assigned a high relevance score to this negative document.}
    \label{heatmap2}
\vspace{-2mm}
\end{figure*}
\begin{figure*}[ht]
    \centering
    \includegraphics[scale=0.32]{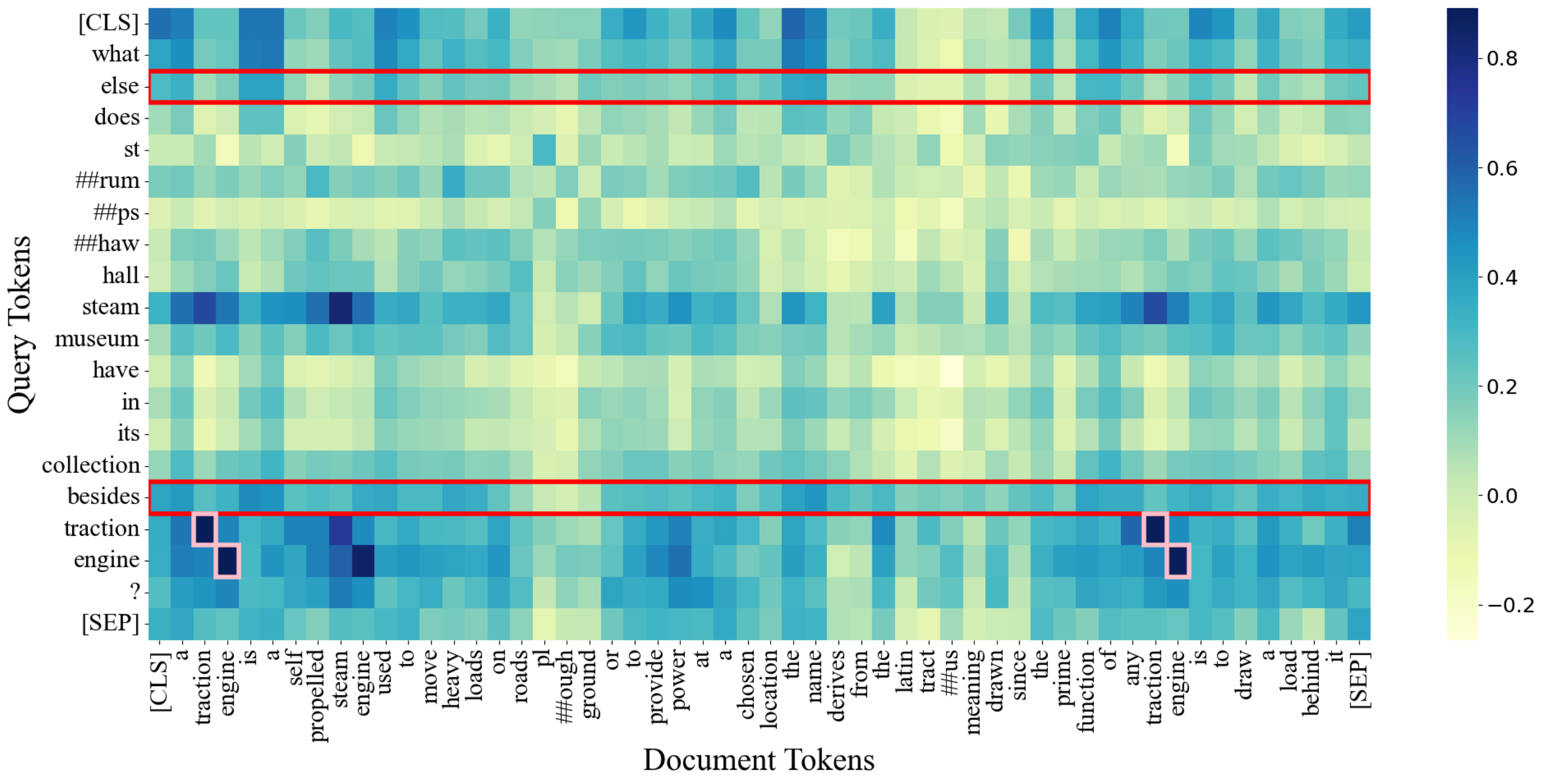}
    \caption{An example of ColBERT on negative document relevance scoring.
    ColBERT overlooks the semantics of `else', `besides' and instead, due to the presence of lexical matches such as `traction engine', assigned a high relevance score to this negative document.}
    \label{heatmap3}
\vspace{-2mm}
\end{figure*}
\begin{figure*}[ht]
    \centering
    \includegraphics[scale=0.32]{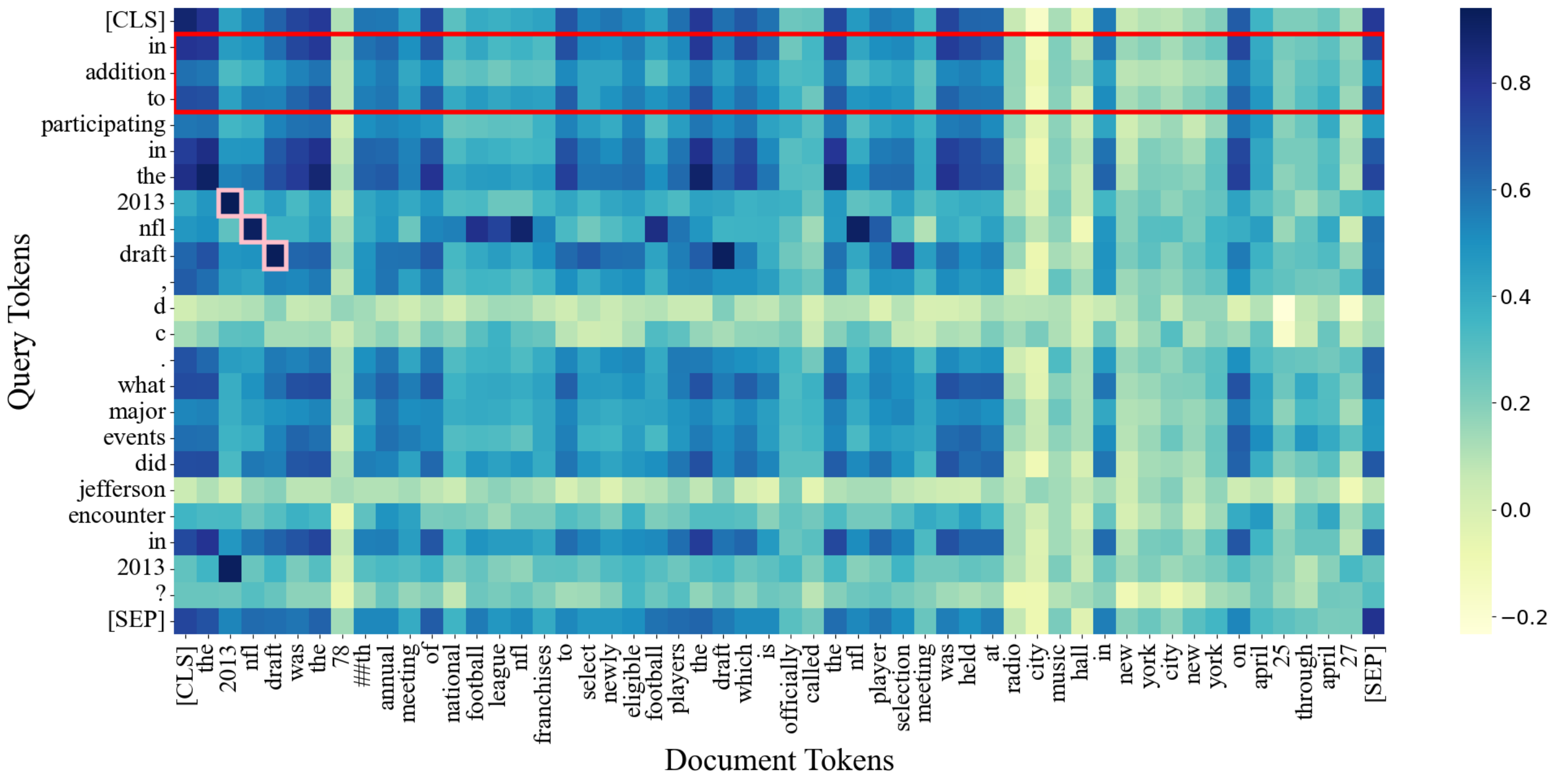}
    \caption{An example of ColBERT on negative document relevance scoring.
    ColBERT overlooks the semantics of `in addition to' and instead, due to the presence of lexical matches such as `2013 nfl draft', assigned a high relevance score to this negative document.}
    \label{heatmap4}
\vspace{-2mm}
\end{figure*}

\section{Cases of ExcluIR dataset}
\label{appendix:cases}

Table~\ref{tab:cases} shows some cases of ExcluIR dataset.

\begin{table*}[ht]
    \centering
    \renewcommand{\arraystretch}{1.2}
    \caption{Cases of ExcluIR dataset.}
    \label{tab:cases}
    \vspace{-3mm}
    \begin{tabular}{p{0.15\linewidth} p{0.7\linewidth}}
    \toprule
    Exclusionary \newline query & Aside from Bob Dylan's involvement in composing the soundtrack for the 2001 film Blow, who else was involved in the movie? \\
    \hline
    Positive \newline document & Blow is a 2001 American biographical crime film about the American cocaine smuggler George Jung, directed by Ted Demme. David McKenna and Nick Cassavetes adapted Bruce Porter's 1993 book ``Blow: How a Small Town Boy Made \$100 Million with the Medellín Cocaine Cartel and Lost It All'' for the screenplay. It is based on the real-life stories of George Jung, Pablo Escobar, Carlos Lehder Rivas (portrayed in the film as Diego Delgado), and the Medellín Cartel. The film's title comes from a slang term for cocaine. \\
    \hline
    Negative \newline document & ``All the Tired Horses'' is a song written by Bob Dylan, released on his 1970 double album ``Self Portrait''. The song was featured in the 2001 film ``Blow''. \\
    \midrule[1pt]
    Exclusionary \newline query & Can you please recommend a film starring Paul Walker, other than Christmas-themed drama film ``Noel''? \\
    \hline
    Positive \newline document & Paul William Walker IV (September 12, 1973 – November 30, 2013) was an American actor. Walker began his career guest-starring in several television shows such as ``The Young and the Restless'' and ``Touched by an Angel''. Walker gained prominence with breakout roles in coming of age and teen films such as ``She's All That'' and ``Varsity Blues'' (1999). In 2001, Walker gained international fame for his portrayal of Brian O'Conner in the street racing action film ``The Fast and the Furious'' (2001), and would reprise the role in five of the next six installments but died in the middle of the filming of ``Furious 7'' (2015). He also starred in films such as ``Joy Ride'' (2001), ``Timeline'' (2003), ``Into the Blue'' (2005), ``Eight Below'', and ``Running Scared'' (2006). \\
    \hline
    Negative \newline document & Noel is a 2004 Christmas-themed drama film written by David Hubbard and directed by Chazz Palminteri. It stars Penélope Cruz, Susan Sarandon, Paul Walker, Alan Arkin, Daniel Sunjata and an uncredited Robin Williams. It was filmed in Montreal, Canada. \\
    \midrule[1pt]
    Exclusionary \newline query & In addition to participating in the 2013 NFL Draft, D C. What major events did Jefferson encounter in 2013? \\
    \hline
    Positive \newline document & D. C. Jefferson (born May 7, 1989) is an American football tight end who is currently a free agent. He played college football at Rutgers University. He was drafted in the seventh round with the 219th overall pick by the Arizona Cardinals in the 2013 NFL Draft. Jefferson was released on November 4, 2013 after he was arrested on suspicion of driving under the influence. \\
    \hline
    Negative \newline document & The 2013 NFL draft was the 78th annual meeting of National Football League (NFL) franchises to select newly eligible football players. The draft, which is officially called the ``NFL Player Selection Meeting,'' was held at Radio City Music Hall in New York City, New York, on April 25 through April 27. \\
    \bottomrule
    \end{tabular}
\vspace{-2mm}
\end{table*}

\end{document}